\newif\ifproceedings\proceedingstrue
\newif\ifsinglecolumn 
\SetMathAlphabet{\mathcal}{normal}{OMS}{lmsy}{m}{n}
\SetMathAlphabet{\mathcal}{bold}{OMS}{lmsy}{m}{n}
\tikzset{font=\footnotesize}
\newcommand\bi{\begin{itemize}}
\newcommand\ei{\end{itemize}}
\newcommand\ben{\begin{enumerate}}
\newcommand\een{\end{enumerate}}
\newcommand\george[1]{\textcolor{red}{}}
\newcommand\dave[1]{\textcolor{red}{}}
\newcommand\bano[1]{\textcolor{red}{}}
\newcommand\mustafa[1]{\textcolor{red}{}}
\newcommand\alberto[1]{\textcolor{red}{}}
\newcommand\sysname{Chainspace\xspace}
\newcommand\sbac{$\mathcal{S}$-BAC\xspace}
\newcommand\cscoin{CSCoin\xspace}
\newcommand\bitcoin{Bitcoin\xspace}
\newcommand\bitcoinng{Bitcoin-NG\xspace}
\newcommand\byzcoin{ByzCoin\xspace}
\newcommand\rscoin{RSCoin\xspace}
\newcommand\elastico{Elastico\xspace}
\newcommand\omniledger{OmniLedger\xspace}
\newcommand\algorand{Algorand\xspace}
\newcommand\bigchaindb{BigchainDB\xspace}
\newcommand\hyperledgerfabric{Hyperledger Fabric\xspace}
\newcommand\type{\mathsf{type}}
\newcommand\types{\mathsf{types}}
\newcommand\id{\mathsf{id}}
\newcommand\procedures{\mathsf{proc}}
\newcommand\csobject{o}
\newcommand\contract{c}
\newcommand\csprocedure{p}
\newcommand\pinputs{\vec{w}}
\newcommand\preferences{\vec{r}}
\newcommand\poutputs{\vec{x}}
\newcommand\lparams{\mathsf{lpar}}
\newcommand\lreturns{\mathsf{lret}}
\newcommand\sparams{\mathsf{spar}}
\newcommand\sreturns{\mathsf{sret}}
\newcommand\dependencies{\mathsf{dep}}
\newcommand\faulty{f}
\newcommand\checker{v}
\newcommand\transaction{T}
\newcommand\shard{\phi}
\newcommand\LP{\textsc{LocalPrepared}}
\newcommand\APC{\textsc{AllPrepared}}
\newcommand\SPA{\textsc{SomePrepared}}
\newcommand\bftinit{BFT-Initiator~}
\newcommand\modsmart{\textsc{Mod-SMaRt~}}
\newcommand\bftsmart{\textsc{bft-SMaRt~}}
\newtheorem{SecThm}{Security Theorem}
\newtheorem{sbacThm}{\sbac Theorem}
\title{\sysname: A Sharded Smart Contracts Platform}
\author{
\IEEEauthorblockN{Mustafa Al-Bassam\IEEEauthorrefmark{1}, 
Alberto Sonnino\IEEEauthorrefmark{1}, 
Shehar Bano\IEEEauthorrefmark{1},
Dave Hrycyszyn\IEEEauthorrefmark{2} 
and George Danezis\IEEEauthorrefmark{1}}
\IEEEauthorblockA{\IEEEauthorrefmark{1} University College London, United Kingdom} 
\IEEEauthorblockA{\IEEEauthorrefmark{2} constructiveproof.com}
}
\begin{document}

\IEEEoverridecommandlockouts
\makeatletter\def\@IEEEpubidpullup{9\baselineskip}\makeatother
\IEEEpubid{\parbox{\columnwidth}{Permission to freely reproduce all or part
    of this paper for noncommercial purposes is granted provided that
    copies bear this notice and the full citation on the first
    page. Reproduction for commercial purposes is strictly prohibited
    without the prior written consent of the Internet Society, the
    first-named author (for reproduction of an entire paper only), and
    the author's employer if the paper was prepared within the scope
    of employment.  
}
\hspace{\columnsep}\makebox[\columnwidth]{}}

\maketitle

\begin{abstract}
\sysname is a decentralized infrastructure, known as a distributed ledger, that supports user defined smart contracts and executes user-supplied transactions on their objects. The correct execution of smart contract transactions is verifiable by all. The system is scalable, by sharding state and the execution of transactions, and using \sbac, a distributed commit protocol, to guarantee consistency. \sysname is secure against subsets of nodes trying to compromise its integrity or availability properties through Byzantine Fault Tolerance (BFT), and extremely high-auditability, non-repudiation and `blockchain' techniques. Even when BFT fails, auditing mechanisms are in place to trace malicious participants. We present the design, rationale, and details of \sysname; we argue through evaluating an implementation of the system about its scaling and other features; we illustrate a number of privacy-friendly smart contracts for smart metering, polling and banking and measure their performance.
\end{abstract}

\section{Introduction}
\label{introduction}
\sysname is a distributed ledger platform for high-integrity and transparent processing of transactions within a decentralized system. Unlike application specific distributed ledgers, such as Bitcoin~\cite{nakamoto2008bitcoin} for a currency, or certificate transparency~\cite{laurie2013certificate} for certificate verification, \sysname offers extensibility though smart contracts, like Ethereum~\cite{wood2014ethereum}. However, users expose to \sysname enough information about contracts and transaction semantics, to provide higher scalability through sharding across infrastructure nodes: our modest testbed of 60 cores achieves 350 transactions per second, as compared with a peak rate of less than 7 transactions per second for \bitcoin over 6K full nodes. Etherium currently processes 4 transactions per second, out of theoretical maximum of 25. Furthermore, our platform is agnostic as to the smart contract language, or identity infrastructure, and supports privacy features through modern zero-knowledge techniques~\cite{bootle2016efficient,danezis2014square}.

Unlike other scalable but `permissioned' smart contract platforms, such as \hyperledgerfabric~\cite{cachin2016architecture} or \bigchaindb~\cite{mcconaghy2016bigchaindb}, \sysname aims to be an `open' system: it allows anyone to author a smart contract, anyone to provide infrastructure on which smart contract code and state runs, and any user to access calls to smart contracts. Further, it provides ecosystem features, by allowing composition of smart contracts from different authors. We integrate a value system, named \cscoin, as a system smart contract to allow for accounting between those parties.

However, the security model of \sysname, is different from traditional unpermissioned blockchains, that rely on proof-of-work and global replication of state, such as Ethereum. In \sysname smart contract authors designate the parts of the infrastructure that are trusted to maintain the integrity of their contract---and only depend on their correctness, as well as the correctness of contract sub-calls. This provides fine grained control of which part of the infrastructure need to be trusted on a per-contract basis, and also allows for horizontal scalability.

This paper makes the following contributions:
\begin{itemize}
    \item It presents \sysname, a system that can scale arbitrarily as the number of nodes increase, tolerates byzantine failures, and can be fully and publicly audited.
    \item It presents a novel distributed atomic commit protocol, called \sbac, for sharding generic smart contract transactions across multiple byzantine nodes, and correctly coordinating those nodes to ensure safety, liveness and security properties. 
    \item It introduces a distinction between parts of the smart contract that execute a computation, and those that check the computation and discusses how that distinction is key to supporting privacy-friendly smart-contracts.
    \item It provides a full implementation and evaluates the performance of the byzantine distributed commit protocol, \sbac, on a real distributed set of nodes and under varying transaction loads.
    \item It presents a number of key system and application smart contracts and evaluates their performance. The contracts for privacy-friendly smart-metering and privacy-friendly polls illustrate and validate support for high-integrity and high-privacy applications.
\end{itemize}

\noindent {\bf Outline:} \Cref{Overview} presents an overview of \sysname; \Cref{Interface} presents the client-facing application interface; \Cref{design} presents the design of internal data structures guaranteeing integrity, the distributed architecture, the byzantine commit protocols, and smart contract definition and composition. \Cref{theorems} argues the correctness and security; specific smart contracts and their evaluations are presented in \Cref{applications}; \Cref{evaluation} presents an evaluation of the core protocols and smart contract performance; \Cref{limits} presents limitation and \Cref{related} a comparison with related work; and \Cref{conclusions} concludes.

\section{System Overview}
\label{Overview}
\sysname allows applications developers to implement distributed ledger applications by defining and calling procedures of smart contracts operating on controlled objects, and abstracts the details of how the ledger works and scales. In this section, we first describe data model of \sysname, followed by an overview of the system design, its threat model and security properties.

\subsection{Data Model: Objects, Contracts, Transactions.}
\label{data-model}

\sysname applies aggressively the end-to-end principle~\cite{saltzer1984end} in relying on untrusted end-user applications to build transactions to be checked and executed. We describe below key concepts within the \sysname data model, that developers need to grasp to use the system.

\emph{Objects} are atoms that hold state in the \sysname system. We usually refer to an object through the letter $o$, and a set of objects as $\csobject \in O$. All objects have a cryptographically derived unique identifier used to unambiguously refer to the object, that we denote $\id(\csobject)$. Objects also have a type, denoted as $\type(\csobject)$, that determines the unique identifier of the smart contract that defines them, and a type name. In \sysname object state is immutable. Objects may be in two meta-states, either \emph{active} or \emph{inactive}. Active objects are available to be operated on through smart contract procedures, while inactive ones are retained for the purposes of audit only.

\emph{Contracts} are special types of objects, that contain executable information on how other objects of types defined by the contract may be manipulated. They define a set of initial objects that are created when the contract is first created within \sysname. A contract $\contract$ defines a \emph{namespace} within which \emph{types} (denoted as $\types(\contract)$) and a \emph{checker} $\checker$ for \emph{procedures} (denoted as $\procedures(\contract)$) are defined.

A \emph{procedure}, $\csprocedure$, defines the logic by which a number of objects, that may be \emph{inputs} or \emph{references}, are processed by some logic and \emph{local parameters} and \emph{local return values} (denoted as $\lparams$ and $\lreturns$), to generate a number of object \emph{outputs}. Notionally, input objects, denoted as a vector $\pinputs$, represent state that is invalidated by the procedure; references, denoted as $\preferences$ represent state that is only read; and outputs are objects, or $\poutputs$ are created by the procedure. Some of the local parameters or local returns may be secrets, and require confidentiality. We denote those as $\sparams$ and $\sreturns$ respectively.

We denote the execution of such a procedure as: 
\begin{equation}\label{eq:procedure}
\contract.\csprocedure(\pinputs, \preferences, \lparams, \sparams) \rightarrow \poutputs, \lreturns, \sreturns
\end{equation}
for $\pinputs, \preferences, \poutputs \in O$ and $\csprocedure \in \procedures(\contract)$. We restrict the type of all objects (inputs $\pinputs$, outputs $\poutputs$ and references $\preferences$) to have types defined by the same contract $\contract$ as the procedure $\csprocedure$ (formally: $\forall \csobject \in \pinputs \cup \poutputs \cup \preferences . \type(\csobject) \in \types(\contract)$). However, public locals (both $\lparams$ and $\lreturns$) may refer to objects that are from different contracts through their identifiers. We further require a procedure that outputs an non empty set of objects $\poutputs$, to also take as parameters a non-empty set of input objects $\pinputs$. Transactions that create no outputs are allowed to just take locals and references $\preferences$.

Associated with each smart contract $\contract$, we define a \emph{checker} denoted as $\checker$. Those checkers are pure functions (ie.\ deterministic, and have no side-effects), and return a Boolean value. A checker $\checker$ is defined by a contract, and takes as parameters a procedure $\csprocedure$, as well as inputs, outputs, references and locals.
\begin{equation}\label{eq:checker}
\contract.\checker(\csprocedure, \pinputs, \preferences, \lparams, \poutputs, \lreturns, \dependencies) \rightarrow \{\mathsf{true}, \mathsf{false}\}
\end{equation}
Note that checkers do not take any secret local parameters ($\sparams$ or $\sreturns$). A checker for a smart contract returns $\mathsf{true}$ only if there exist some secret parameters $\sparams$ or $\sreturns$, such that an execution of the contract procedure $\csprocedure$, with the parameters passed to the checker alongside $\sparams$ or $\sreturns$, is possible as defined in \Cref{eq:procedure}.
The variable $\dependencies$ represent the context in which the procedure is called: namely information about other procedure executions. This supports composition, as we discuss in detail in the next section.

We note that procedures, unlike checkers, do not have to be pure functions, and may be randomized, keep state or have side effects. A smart contract defines explicitly the checker $\contract.\checker$, but does not have to define procedures \emph{per se}. The \sysname system is oblivious to procedures, and relies merely on checkers. Yet, applications may use procedures to create valid transactions. The distinction between procedures and checkers---that do not take secrets---is key to implementing 
privacy-friendly contracts.

\emph{Transactions} represent the atomic application of one or more valid procedures to active input objects, and possibly some referenced objects, to create a number of new active output objects. The design of \sysname is user-centric, in that a user client executes all the computations necessary to determine the outputs of one or more procedures forming a transaction, and provides enough evidence to the system to check the validity of the execution and the new objects.

Once a transaction is accepted in the system it `consumes' the input objects, that become inactive, and brings to life all new output objects that start their life by being active. References on the other hand must be active for the transaction to succeed, and remain active once a transaction has been successfully committed. 


An client packages enough information about the execution of those procedures to allow \sysname to safely \emph{serialize} its execution, and \emph{atomically} commit it only if all transactions are valid according to relevant smart contract checkers. 


\begin{figure}[!t]
    \centering
    \includegraphics[width=.48\textwidth]{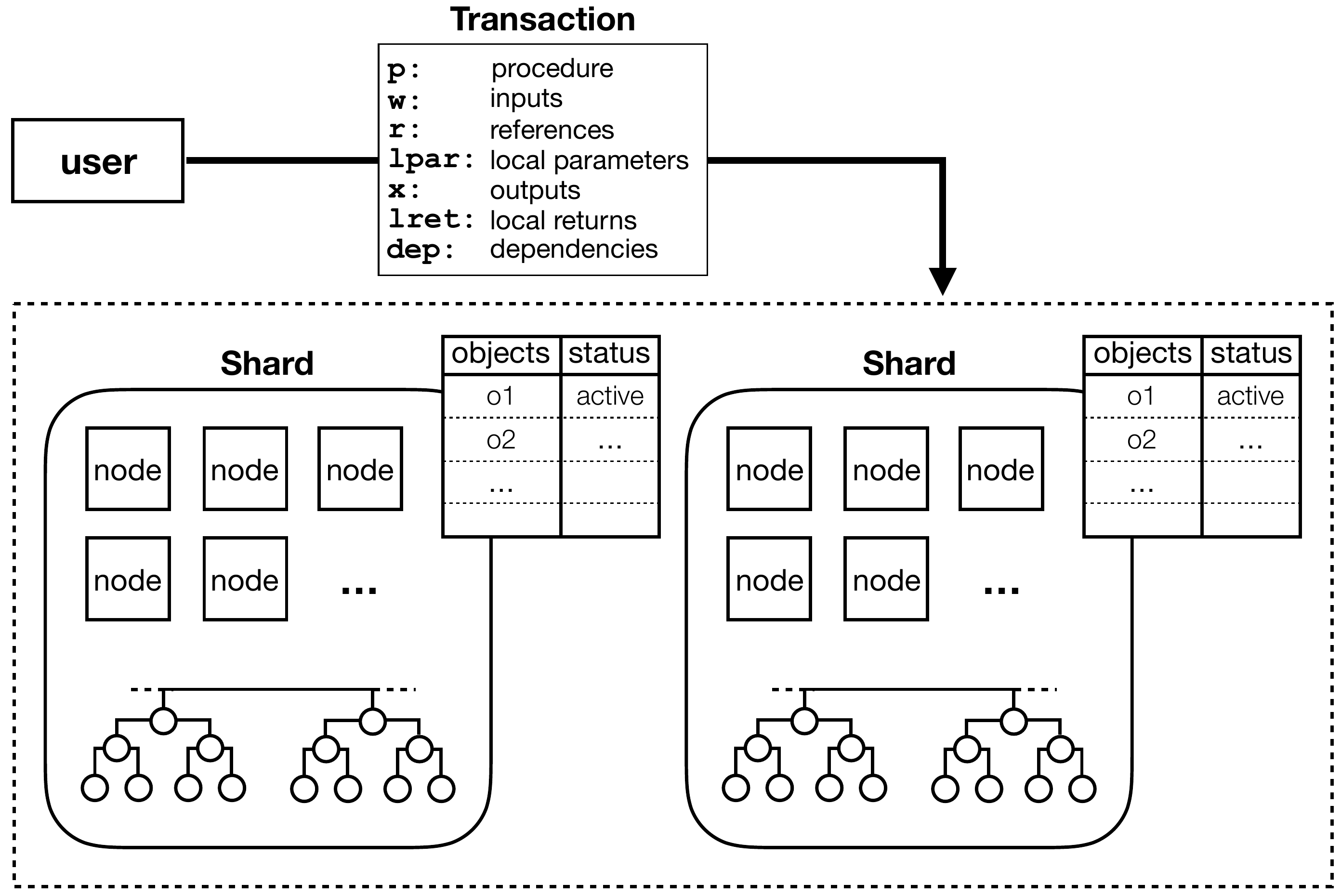}
    \caption{Design overview of \sysname system.}
    \label{fig:design}
\end{figure}

\subsection{System Design, Threat Model and Security Properties}
\label{threat-security}

We provide an overview of the system design, illustrated in \Cref{fig:design}. \sysname is comprised of a network of infrastructure \emph{nodes} that manage valid objects, and ensure that only valid transactions are committed. A key design goal is to achieve scalability in terms of high transaction throughput and low latency. To this end, nodes are organized into shards that manage the state of objects, keep track of their validity, and record transactions aborted or committed. Within each shard all honest nodes ensure they consistently agree whether to accept or reject a transaction: whether an object is active or inactive at any point, and whether traces from contracts they know check. Across shards, nodes must ensure that transactions are \emph{committed} if all shards are willing to commit the transaction, and rejected (or \emph{aborted}) if any shards decide to abort the transaction---due to checkers returning \textsf{false} or objects being inactive. To satisfy these requirements, \sysname implements \sbac---a protocol that composes existing Byzantine agreement and atomic commit primitives in a novel way. Consensus on committing (or aborting) transactions takes place in parallel across different shards. For transparency and auditability, nodes in each shard periodically publish a signed hash chain of \emph{checkpoints}: shards add a block (Merkle tree) of evidence including transactions processed in the current epoch, and signed promises from other nodes, to the hash chain. 

\sysname supports security properties against two distinct types of adversaries, both polynomial time bounded:

\begin{itemize}
\item  {\bf Honest Shards (HS).} The first adversary may create arbitrary contracts, and input arbitrary transactions into \sysname, however they are bound to only control up to $\faulty$ faulty nodes in any shard. As a result, and to ensure the correctness and liveness properties of Byzantine consensus, each shard must have a size of at least $3\faulty + 1$ nodes. 

\item {\bf Dishonest Shards (DS).} The second adversary has, additionally to HS, managed to gain control of one or more shards, meaning that they control over $\faulty$ nodes in those shards. Thus, its correctness or liveness may not be guaranteed. 
\end{itemize}

Faulty nodes in shards may behave arbitrarily, and collude to violate any of the security, safely or liveness properties of the system. They may emit incorrect or contradictory messages, as well as not respond to any or some requests. 

Given this threat model, \sysname supports the following security properties: 

\begin{itemize}

\item \textbf{Transparency.} \sysname ensures that anyone in possession of the identity of a valid object may authenticate the full history of transactions and objects that led to the creation of the object. No transactions may be inserted, modified or deleted from that causal chain or tree. Objects may be used to self-authenticate its full history---this holds under both the HS and DS threat models.

\item \textbf{Integrity.} Subject to the HS threat model, when one or more transactions are submitted only a set of valid non-conflicting transactions will be executed within the system. This includes resolving conflicts---in terms of multiple transactions using the same objects---ensuring the validity of the transactions, and also making sure that all new objects are registered as active. Ultimately, \sysname transactions are accepted, and the set of active objects changes, as if executed sequentially---however, unlike other systems such as Ethereum~\cite{wood2014ethereum}, this is merely an abstraction and high levels of concurrency are supported.

\item \textbf{Encapsulation.} The smart contract checking system of \sysname enforces strict isolation between smart contracts and their state---thus prohibiting one smart contract from directly interfering with objects from other contracts. Under both the HS and DS threat models. However, cross-contract calls are supported but mediated by well defined interfaces providing encapsulation.


\item \textbf{Non-repudiation.} In case conflicting or otherwise invalid transactions were to be accepted in honest shards (in the case of the DS threat model), then evidence exists to pinpoint the parties or shards in the system that allowed the inconsistency to occur. Thus, failures outside the HS threat model, are detectable; the guildy parties may be banned; and appropriate off-line recovery mechanisms could be deployed.

\end{itemize}

\section{The \sysname Application Interface}
\label{Interface}

\begin{figure*}
    \centering

\begin{prooftree}
\AxiomC{$\alpha_0, \mathit{Valid}(t), \alpha'$}
\AxiomC{$\alpha', \mathit{Valid}(T'), \alpha_1$}
\RightLabel{(Sequence)}
\BinaryInfC{$\alpha_0, \mathit{Valid}(T = t::T'), \alpha_1$}
\end{prooftree}

\begin{prooftree}
\AxiomC{$\alpha_0, \mathit{Valid}(\dependencies), \alpha'$}
\AxiomC{$\alpha', \contract.\checker(\csprocedure, \pinputs, \preferences, \lparams, \poutputs, \lreturns, \dependencies), (\alpha' \setminus \pinputs) \cup \poutputs $}
\AxiomC{$\pinputs, \preferences \in \alpha' \land$}
\noLine
\UnaryInfC{$(\poutputs \neq \emptyset) \rightarrow (\pinputs \neq \emptyset) \land$}
\noLine
\UnaryInfC{$\forall \csobject \in \pinputs \cup \poutputs \cup \preferences . \type(\csobject) \in \types(\contract)$}
\RightLabel{(Check)}
\TrinaryInfC{$\alpha_0, \mathit{Valid}(t = [\contract, \csprocedure, \pinputs, \preferences, \poutputs, \lparams, \lreturns, \dependencies]), (\alpha' \setminus \pinputs) \cup \poutputs$}
\end{prooftree}
    \caption{The sequencing and checking validity rules for transactions.}
    \label{fig:rules}
\end{figure*}

Smart Contract developers in \sysname register a smart contract $c$ into the distributed system managing \sysname, by defining a  checker for the contract and some initial objects. Users may then submit transactions to operate on those objects in ways allowed by the checkers. 
Transactions represent the execution of one or more procedures from one or more smart contracts. It is necessary for all inputs to all procedures within the transaction to be active for a transaction to be executed and produce any output objects. 

Transactions are \emph{atomic}: either all their procedures run, and produce outputs, or none of them do. Transactions are also \emph{consistent}: in case two transactions are submitted to the system using the same active object inputs, at most one of them will eventually be executed to produce outputs. Other transactions, called \emph{conflicting}, will be aborted.

\noindent {\bf Representation of Transactions.} A transaction within \sysname is represented by sequence of \emph{traces} of the executions of the procedures that compose it, and their interdependencies. These are computed and packaged by end-user clients, and contain all the information  a checker needs to establish its correctness. A Transaction is a data structure such that:
\begin{align*}
\text{type}\ &\textit{Transaction}: \textit{Trace}\ \text{list}\\
\text{type}\ &\textit{Trace}: \text{Record}\ \{ \\
&\contract: \id(\csobject), \quad
\csprocedure: \text{string},\\
&\pinputs, \preferences, \poutputs: \id(\csobject)\ \text{list},\\
&\lparams, \lreturns: \text{arbitrary data}, \\
&\dependencies: \textit{Trace}\ \text{list} \}
\end{align*}
To generate a set of traces composing the transaction, a \emph{user executes on the client side all the smart contract procedures} required on the input objects, references and local parameters, and generates the output objects and local returns for every procedure---potentially also using secret parameters and returns. Thus the actual computation behind the transactions is performed by the user, and the traces forming the transaction already contain the output objects and return parameters, and sufficient information to check their validity through smart contract checkers. This design pattern is related to traditional \emph{optimistic concurrency control}.

Only valid transactions are eventually committed into the \sysname system, as specified by two validity rules \emph{sequencing} and \emph{checking} presented in \Cref{fig:rules}. Transactions are considered valid within a context of a set of active objects maintained by \sysname, denoted with $\alpha$. Valid transactions lead to a new context of active objects (eg.\ $\alpha'$). We denote this through the triplet ($\alpha, \textit{Valid}(T), \alpha'$), which is true if the execution of transaction $T$ is valid within the context of active objects $\alpha$ and generates a new context of active objects $\alpha'$. The two rules are as follows:

\begin{itemize}
    \item (Sequence rule). A `\textit{Trace} list' (within a `\textit{Transaction}' or list of dependencies) is valid if each of the traces are valid in sequence (see \Cref{fig:rules} rule for sequencing). Further, the active objects set is updated in sequence before considering the validity of each trace.
    \item (Check rule).  A particular `\textit{Trace}' is valid, if the sequence of its dependencies are valid, and then in the resulting active object context, the checker for it returns $\mathsf{true}$. A further three side conditions must hold: (1) inputs and references must be active; (2) if the trace produces any output objects it must also contain some input objects; and (3) all objects passed to the checker must be of types defined by the smart contract of this checker (see \Cref{fig:rules} rule for checking).
\end{itemize}

The ordering of active object sets in the validation rules result in a depth-first validation of all traces, which represents a depth-first execution and data flow dependency between them. It is also noteworthy that only the active set of objects needs to be tracked to determine the validity of new transactions, which is in the order of magnitude of active objects in the system. The much longer list of inactive objects, which grows to encompass the full history of every object in the system is not needed---which we leverage to enable better when validating transactions. It also results in a smaller amount of working memory to perform incremental audits.


A valid transaction is executed in a serialized manner, and committed or aborted atomically. If it is committed, the new set of active objects replaces the previous set; if not the set of active objects does not change. Determining whether a transaction may commit involves ensuring all the input objects are active, and all are consumed as a result of the transaction executing, as well as all new objects becoming available for processing (references however remain active). \sysname ensures this through the distributed atomic commit protocol, \sbac.




\noindent {\bf Smart contract composition.} A contract procedure may call a transaction of another smart contract, with specific parameters and rely upon returned values. This is achieved through passing the $\dependencies$ variable to a smart contract checker, a validated list of traces of all the sub-calls performed. The checker can ensure that the parameters and return values are as expected, and those dependencies are checked for validity by \sysname.

Composition of smart contracts is a key feature of a transparent and auditable computation platform. It allows the creation of a library of smart contracts that act as utilities for other higher-level contracts: for example, a simple contract can implement a cryptographic currency, and other contracts---for e-commerce for example---can use this currency as part of their logic. Furthermore, we compose smart contracts, in order to build some of the functionality of \sysname itself as a set of `system' smart contracts, including management of shards mapping to nodes, key management of shard nodes, and governance.

\sysname also supports the atomic batch execution of multiple procedures for efficiency, that are not dependent on each other.


\noindent {\bf Reads.} Besides executing transactions, \sysname clients, need to read the state of objects, if anything, to correctly form transactions. Reads, by themselves, cannot lead to inconsistent state being accepted into the system, even if they are used as inputs or references to transactions. This is a result of the system checking the validity rules before accepting a transaction, which will reject any stale state. 

Thus, any mechanism may be used to expose the state of objects to clients, including traditional relational databases, or `no-SQL' alternatives. Additionally, any indexing mechanism may be used to allow clients to retrieve objects with specific characteristics faster. Decentralized, read-only stores have been extensively studied, so we do not address the question of reads further in this work.

\noindent {\bf Privacy by design.} Defining smart contract logic as checkers allows \sysname to support privacy friendly-contracts by design. In such contracts some information in objects is not in the clear, but instead either encrypted using a public key, or committed using a secure commitment scheme as~\cite{pedersen1991non}. The transaction only contains a valid proof that the logic or invariants of the smart contract procedure were applied correctly or hold respectively, and can take the form of a zero-knowledge proof, or a Succinct Argument of Knowledge (SNARK). Then, generalizing the approach of~\cite{miers2013zerocoin}, the checker runs the verifier part of the proof or SNARK that validates the invariants of the transactions, without revealing the secrets within the objects to the verifiers.

\section{The \sysname System Design}
\label{design}
In \sysname a network of infrastructure \emph{nodes} manages valid objects, and ensure key invariants: namely that only valid transactions are committed. We discuss the data structures nodes use collectively and locally to ensure high integrity; and the distributed protocols they employ to reach consensus on the accepted transactions.

\subsection{High-Integrity Data Structures}

\sysname employs a number of high-integrity data structures. They enable those in possession of a valid object or its identifier
to verify all operations that lead to its creation; they are also used to support \emph{non-equivocation}---preventing \sysname nodes from providing a split view of the state they hold without detection.

\noindent {\bf Hash-DAG structure.} Objects and transactions naturally form a directed acyclic graph (DAG): given an initial state of active objects a number of transactions render their inputs invalid, and create a new set of outputs as active objects. These may be represented as a directed graph between objects, transactions and new objects and so on. Each object may only be created by a single transaction trace, thus cycles between future transactions and previous objects never occur. We prove that output object identifiers resulting from valid transactions are fresh (see \Cref{Thm1}). Hence, the graph of objects inputs, transactions and objects outputs form a DAG, that may be indexed by their identifiers.

We leverage this DAG structure, and augment it to provide a high-integrity data structure. Our principal aim is to ensure that given an object, and its identifier, it is possible to unambiguously and unequivocally check all transactions and previous (now inactive) objects and references that contribute to the existence of the object. To achieve this we define as an identifier for all objects and transactions a cryptographic hash that directly or indirectly depends on the identifiers of all state that contributed to the creation of the object.

Specifically, we define a function $\id(\mathit{Trace})$ as the identifier of a trace contained in transaction $T$. The identifier of a trace is a cryptographic hash function over the name of contract and the procedure producing the trace; as well as serialization of the input object identifiers, the reference object identifiers, and all local state of the transaction (but not the secret state of the procedures); the identifiers of the trace's dependencies are also included. Thus all information contributing to defining the Trace is included in the identifier, except the output object identifiers.

We also define the $\id(\csobject)$ as the identifier of an object $\csobject$. We derive this identifier through the application of a cryptographic hash function, to the identifier of the trace that created the object $\csobject$, as well as a unique name assigned by the procedures creating the trace, to this output object. (Unique in the context of the outputs of this procedure call, not globally, such as a local counter.) 

An object identifier $\id(\csobject)$ is a high-integrity handle that may be used to authenticate the full history that led to the existence of the object $\csobject$. Due to the collision resistance properties of secure cryptographic hash functions an adversary is not able to forge a past set of objects or transactions that leads to an object with the same identifier. Thus, given  $\id(\csobject)$ anyone can verify the authenticity of a trace that led to the existence of $\csobject$.

A very important property of object identifiers is that future transactions cannot re-create an object that has already become inactive. Thus checking object validity only requires maintaining a list of active objects, and not a list of past inactive objects:
\begin{SecThm}\label{Thm1}
No sequence of valid transactions, by a polynomial time constrained adversary, may re-create an object with the same identifier with an object that has already been active in the system.
\end{SecThm}
\begin{proof}
\small
We argue this property by induction on the serialized application of valid transactions, and for each transaction by structural induction on the two validity rules. Assuming a history of $n-1$ transactions for which this property holds we consider transaction $n$. Within transaction $n$ we sequence all traces and their dependencies, and follow the data flow of the creation of new objects by the `check' rule. For two objects to have the same $\id(\csobject)$ there need to be two invocations of the check rule with the same contract, procedure, inputs and references. However, this leads to a contradiction: once the first trace is checked and considered valid the active input objects are removed from the active set, and the second invocation becomes invalid. Thus, as long as object creation procedures have at least one input (which is ensured by the side condition) the theorem holds, unless an adversary can produce a hash collision. The inductive base case involves assuming that no initial objects start with the same identifier -- which we can ensure axiomatically.
\end{proof}


We call this directed acyclic graph with identifiers derived using cryptographic functions a Hash-DAG, and we make extensive use of the identifiers of objects and their properties in \sysname.

\noindent {\bf Node Hash-Chains.} Each node in \sysname, that is entrusted with preserving integrity, associates with its shard a hash chain. Periodically, peers within a shard consistently agree to seal a \emph{checkpoint}, as a block of transactions into their hash chains. They each form a Merkle tree containing all transactions that have been accepted or rejected in sequence by the shard since the last checkpoint was sealed. Then, they extend their hash chain by hashing the root of this Merkle tree and a block sequence number, with the head hash of the chain so far, to create the new head of the hash chain. Each peer signs the new head of their chain, and shares it with all other peers in the shard, and anyone who requests it. For strong auditability additional information, besides committed or aborted transactions, has to be included in the Merkle tree: node should log any promise to either commit or abort a transaction from any other peer in any shard (the prepared(\transaction,*) statements explained in the next sections). 

All honest nodes within a shard independently create the same chain for a checkpoint, and a signature on it---as long as the consensus protocols within the shards are correct. We say that a checkpoint represents the decision of a shard, for a specific sequence number, if at least $\faulty + 1$ signatures of shard nodes sign it. On the basis of these hash chains we define a \emph{partial audit} and a \emph{full audit} of the \sysname system.

In a \emph{partial audit} a client is provided evidence that a transaction has been either committed or aborted by a shard. A client performing the partial audit may request from any node of the shard evidence for a transaction \transaction. The shard peer will present a block representing the decision of the shard, with $\faulty + 1$ signatures, and a proof of inclusion of a commit or abort for the transaction, or a signed statement the transaction is unknown. A partial audit provides evidence to a client of the fate of their transaction, and may be used to detect past of future violations of integrity. A partial audit is an efficient operation since the evidence has size $O(s + \log N)$ in $N$ the number of transactions in the checkpoint and $s$ the size of the shard---thanks to the efficiency of proving inclusion in a Merkle tree, and checking signatures.

A \emph{full audit} involves replaying all transactions processed by the shard, and ensuring that (1) all transactions were valid according to the checkers the shard executed; (2) the objects input or references of all committed transactions were all active (see rules in \Cref{fig:rules}); and (3) the evidence received from other shards supports committing or aborting the transactions. To do so an auditor downloads the full hash-chain representing the decisions of the shard from the beginning of time, and re-executes all the transactions in sequence. This is possible, since---besides their secret signing keys---peers in shards have no secrets, and their execution is deterministic once the sequence of transactions is defined. Thus, an auditor can re-execute all transactions in sequence, and check that their decision to commit or abort them is consistent with the decision of the shard. Doing this, requires any inter-shard communication (namely the promises from other shards to commit or abort transactions) to be logged in the hash-chain, and used by the auditor to guide the re-execution of the transactions. A full audit needs to re-execute all transactions and requires evidence of size $O(N)$ in the number $N$ of transactions. This is costly, but may be done incrementally as new blocks of shard decisions are created. 




\subsection{Distributed Architecture \& Consensus}

A network of \emph{nodes} manages the state of \sysname objects, keeps track of their validity, and record transactions that are seen or that are accepted as being committed. 

\sysname uses sharding strategies to ensure scalability: a public function $\shard(\csobject)$ maps each object $\csobject$ to a set of nodes, we call a \emph{shard}. These nodes collectively are entrusted to manage the state of the object, keep track of its validity, record transactions that involve the object, and eventually commit at most one transaction consuming the object as input and rendering it inactive. However, nodes must only record such a transaction as committed if they have certainty that all other nodes have, or will in the future, record the same transaction as consuming the object. We call this distributed algorithm the \emph{consensus} algorithm within the shard.


For a transaction $\transaction$ we define a set of \emph{concerned nodes}, $\Phi(\transaction)$ for a transaction structure $\transaction$. We first denote as $\zeta$ the set of all objects identifiers that are input into or referenced by any trace contained in $\transaction$. We also denote as $\xi$ the set of all objects that are output by any trace in $\transaction$. The function $\Phi(\transaction)$ represents the set of nodes that are managing objects that should exist, and be active, in the system for $\transaction$ to succeed. More mathematically, $\Phi(\transaction) = \bigcup \{ \phi(\csobject_i) | \csobject_i \in \zeta \setminus \xi \} $, where $\zeta \setminus \xi$ represents the set of objects input but not output by the transaction itself (its free variables). The set of concerned peers thus includes all shard nodes managing objects that already exist in \sysname that the transaction uses as references or inputs. 

An important property of this set of nodes holds, that ensures that all smart contracts involved in a transaction will be mapped to some concerned nodes that manage state from this contract:

\begin{SecThm}
\label{thm:partition}
If a contract $\contract$ appears in any trace within a transaction $\transaction$, then the concerned nodes set $\Phi(\transaction)$ will contain nodes in a shard managing an object $\csobject$ of a type from contract $\contract$. I.e.\ $\exists o. \type(o) \in \types(\contract) \land \shard(o) \cap \Phi(\transaction) \neq \emptyset$.
\end{SecThm}
\begin{proof}
\small
Consider any trace $t$ within $\transaction$, from contract $c$. If the inputs or references to this trace are not in $\xi$---the set of objects that were created within $\transaction$---then their shards will be included within $\Phi(\transaction)$. Since those are of types within $\contract$ the theorem holds. If on the other hand the inputs or references are in $\xi$, it means that there exists another trace within $\transaction$ from the same contract $\contract$ that generated those outputs. We then recursively apply the case above to this trace from the same $\contract$. The process will terminate with some objects of types in $\contract$ and shard managing them within the concerned nodes set---and this is guarantee to terminate due to the Hash-DAG structure of the transactions (that may have no loops).
\end{proof}

\Cref{thm:partition} ensures that the set of concerned nodes, includes nodes that manage objects from all contracts represented in a transaction. \sysname leverages this to distribute the process of rule validation across peers in two ways:
\begin{itemize}
\item For any existing object $\csobject$ in the system, used as a reference or input within a transaction $\transaction$, only the shard nodes managing it, namely in $\shard(\csobject)$, need to check that it is active (as part of the `check' rule in \Cref{fig:rules}).
\item For any trace $t$ from contract $\contract$ within a transaction $\transaction$, only shards of concerned nodes that manage objects of types within $\contract$ need to run the checker of that contract to validate the trace (again as part of the `check' rule), and that all input, output and reference objects are of types within $\contract$.
\end{itemize}
However, all shards containing concerned nodes for $\transaction$ need to ensure that all others have performed the necessary checks before committing the transaction, and creating new objects.

There are many options for ensuring that concerned nodes in each shards do not reach an inconsistent state for the accepted transactions, such as Nakamoto consensus through proof-of-work~\cite{nakamoto2008bitcoin}, two-phase commit protocols~\cite{lampson1994distributed}, and classical consensus protocols like Paxos~\cite{lamport2001paxos}, PBFT~\cite{castro1999practical}, or xPaxos~\cite{liu2015xft}. However, these approaches lack in performance, scalability, and/or security. We design an open, scalable and decentralized mechanism to perform \emph{Sharded Byzantine Atomic Commit} or \sbac.

\subsection{Sharded Byzantine Atomic Commit (\sbac).}


\sysname implements the previously described intra-shard consensus algorithm for transaction processing in the \emph{byzantine} and \emph{asynchronous} setting, through the \emph{Sharded Byzantine Atomic Commit} (\sbac) protocol, that combines two primitive protocols: \emph{Byzantine Agreement} and \emph{atomic commit}. 

\begin{itemize}
    \item \emph{Byzantine agreement} ensures that all honest members of a shard of size $3 \faulty + 1$, agree on a specific common sequence of actions, despite 
    some $\faulty$ malicious nodes within the shard. It also guarantees that when agreement is sought, a decision or sequence will eventually be agreed upon. The agreement protocol is executed within each shard to coordinate all nodes. We use \modsmart~\cite{modsmart} implementation of PBFT for state machine replication that provides an optimal number of communications steps (similar to PBFT~\cite{castro1999practical}). This is achieved by replacing reliable broadcast with a special leader-driven Byzantine consensus primitive called Validated and Provable Consensus (VP-Consensus). 
    \item \emph{Atomic commit} is ran across all shards managing objects relied upon by a transaction. It ensures that each shard needs to accept to commit a transaction, for the transaction to be committed; even if a single shard rejects the transaction, then all agree it is rejected. We propose the use of a simple two-phase commit protocol~\cite{bernstein1987rrency}, composed with an agreement protocol to achieve this---loosely inspired by Lamport and Gray~\cite{gray2006consensus}. This protocol was the first to reconcile the needs for distributed commit, and replicated consensus (but only in the non-byzantine setting). 
\end{itemize}

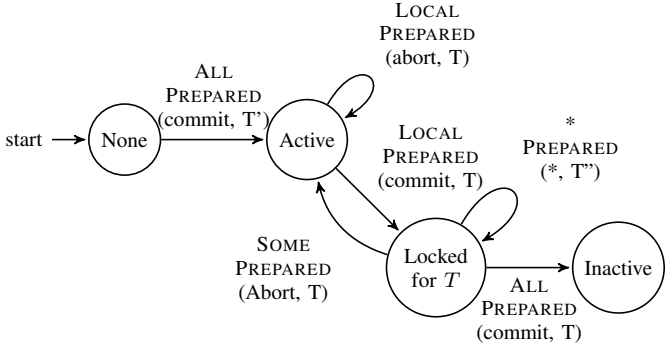
\begin{figure}
    \centering
    
    \begin{tikzpicture}[->,>=stealth',shorten >=1pt,auto,node distance=2.4cm,
                    semithick]
  \tikzstyle{every state}=[fill=white,draw=black,text=black]

  \node[initial,state] (A)                    {None};
  \node[state]         (B) [ right of=A] {Active};
  \node[state]         (C) [below right of=B,align=center] {Locked\\ for $\transaction$};
  \node[state]         (D) [ right of=C] {Inactive};
  

  \path (A) edge [align=center]     node {\textsc{All} \\ \textsc{Prepared} \\ (commit, T')} (B)
        (B) edge [in=30,out=60, loop,  align=center] node {\textsc{Local} \\ \textsc{Prepared} \\ (abort, T)} (B)
            edge             [align=center] node {\textsc{Local} \\ \textsc{Prepared} \\ (commit, T)} (C)
        (C) edge             [align=center] node [below] {\textsc{All} \\ \textsc{Prepared} \\ (commit, T)} (D)
            edge [bend left, align=center]  node {\textsc{Some} \\ \textsc{Prepared} \\ (Abort, T)} (B)
            edge [in=30,out=60, loop, align=center]  node {\textsc{*} \\ \textsc{Prepared} \\ (*, T'')} (B)
            ;
\end{tikzpicture}

    \caption{The state machine representing the active, locked and inactive states for any object within \sysname. Each node in a shard replicates the state of the object, and participates in a consensus protocol that allows it to derive the invariants ``Local prepared'', ``All prepared'', and ``Some prepared'' to update the state of an object.}
    \label{fig:state}
\end{figure}

\begin{figure*}[t]
\centering
\begin{tikzpicture}[shorten >=1pt]
 \tikzset{dot/.style={circle,fill=#1,inner sep=0,minimum size=4pt}}

 \path[draw, ->] (0, 3) -- (13, 3) node[left, pos=-0.02] {User with $T \{o_1, o_2\} \rightarrow o_3$}
    node[pos=0.05, dot=black,draw] (ustart) {}
    node[pos=0.8, dot=black,draw] (ufinal) {};

 \path[draw, ->] (0, 2.1) -- (13, 2.1) node[left, pos=0] {};
 \path[draw, ->] (0, 1.9) -- (13, 1.9) node[left, pos=0] {};
 \path[draw, ->] (0, 2) -- (13, 2) node[left, pos=-0.02] {Input shard $\shard(o_1)$}
    node[pos=0.1, dot=black,draw] (s1p1) {}
    node[rectangle, fill=white, pos=0.2,draw] (s1bft) {BFT}
    node[pos=0.3, dot=black,draw] (s1e1) {}
    node[pos=0.4, dot=black,draw] (s1p2) {}
    node[rectangle, fill=white, pos=0.5,draw] (s1bft2) {BFT}
    node[pos=0.6, dot=black,draw] (s1e2) {}
    node[pos=0.7, dot=black,draw] (s1ee) {};

 \path[draw, ->] (0, 1.1) -- (13, 1.1) node[left, pos=0] {};
 \path[draw, ->] (0, 0.9) -- (13, 0.9) node[left, pos=0] {};
 \path[draw, ->] (0, 1) -- (13, 1) node[left, pos=-0.02] {Input shard $\shard(o_2)$}
    node[pos=0.1, dot=black,draw] (s2p1) {}
    node[rectangle, fill=white, pos=0.2, draw] (s2bft) {BFT}
    node[pos=0.3, dot=black,draw] (s2e1) {}
    node[pos=0.4, dot=black,draw] (s2p2) {}
    node[rectangle, fill=white, pos=0.5, draw] (s2bft2) {BFT}
    node[pos=0.6, dot=black,draw] (s2e2) {}
    node[pos=0.7, dot=black,draw] (s2ee) {};

 \path[draw, ->] (0, 0.1) -- (13, 0.1) node[left, pos=0] {};
 \path[draw, ->] (0, -0.1) -- (13, -0.1) node[left, pos=0] {};
 \path[draw, ->] (0, 0) -- (13, 0) node[left, pos=-0.02] {Output shard $\shard(o_3)$}
    node[pos=0.7, dot=black,draw] (s3p2) {}
    node[rectangle, fill=white, pos=0.8, draw] (s3bft2) {BFT}
    node[pos=0.9] (s3e2) {};

 \path[draw=none] (0, -0.5) -- (13, -0.5) node[left, pos=0] {}
    node[pos=0.05] (Lustart) {}
    node[pos=0.2] (Le1) {}
    node[pos=0.5] (Lp2) {}
    node[pos=0.8] (Le2) {};

 \node[below=0.1cm, align=center] at (Lustart) {Initial\\Broadcast};
 \node[below=0.1cm, align=center] at (Le1) {Process\\Prepare};
 \node[below=0.1cm, align=center] at (Lp2) {Process\\Prepared};
 \node[below=0.1cm, align=center] at (Le2) {Process\\Accept};

\node[above=0.1cm, align=center] at (ustart) {Send prepare($\transaction$)};
\node[above=0.1cm, align=left] at (ufinal) {Client Accept Confirmation};

 \node[below=0.2cm] at (s3bft2) {Create $o_3$};
 \node[below=0.2cm] at (s1bft2) {Inactive $o_1$};
 \node[below=0.2cm] at (s2bft2) {Inactive $o_2$};


 \path[semithick, ->] (ustart) edge (s1p1);
 \path[semithick, ->] (ustart) edge (s2p1);


 \path[semithick, ->] (s1e1) edge (s2p2);
 \path[semithick, ->] (s1e2) edge (s3p2);
 \path[semithick, ->] (s1e2) edge (s2ee);
 \path[semithick, ->] (s2e1) edge (s1p2);
 \path[semithick, ->] (s2e2) edge (s3p2);
 \path[semithick, ->] (s2e2) edge (s1ee);
 \path[semithick, ->] (s2ee) edge (ufinal);
 \path[semithick, ->] (s1ee) edge (ufinal);

\end{tikzpicture}
\caption{\sbac for a transaction $T$ with two inputs ($o_1, o_2$) and one output object ($o_3$). The user sends the transaction to all nodes in shards managing $o_1$ and $o_2$. The \bftinit takes the lead in sequencing $\transaction$, and emits 'prepared(accept, T)' or 'prepared(abort, T)' to all nodes within the shard. Next the \bftinit of each shard assesses whether overall `All proposed(accept, T)' or `Some proposed(abort, T)' holds across shards, sequences the accept(T,*), and sends the decision to the user. All cross-shard arrows represent a multicast of all nodes in one shard to all nodes in another.}
    \label{fig:net}

\end{figure*}

\sbac composes the above primitives in a novel way to ensure that shards process safely and consistently all transactions. \Cref{fig:net} illustrates a simple example of the \sbac protocol to commit a single transaction with two inputs and one output that we may use as an example. The corresponding object state transitions have been illustrated in \Cref{fig:state}. The combined protocol has been described below. For ease of understanding, in our description we state that all messages are sent and processed by shards. In reality, some of these are handled by a designated node in each shard---the \bftinit---as we discuss at the end of this section.        

    \vspace{2mm} \noindent {\bf Initial Broadcast (Prepare)}. A user acts as a transaction initiator, and sends `prepare(T)' to at least one honest concerned node for transaction $\transaction$. 
    To ensure at least one honest node receives it, the user may send the message to $\faulty+1$ nodes of a single shard, or $\faulty+1$ nodes in each concerned shard. 
     
    \vspace{2mm} \noindent {\bf Sequence Prepare}. Upon a message `prepare(T)' being received, nodes in each shard interpret it as the initiation of a two-phase commit protocol performed across the concerned shards. 
    The shard locally sequences `prepare(T)' message through the Byzantine consensus protocol. 
    
    \vspace{2mm} \noindent {\bf Process Prepare}. Upon the first action `prepare($T$)' being sequenced through BFT consensus in a shard, nodes of the shard implicitly decide whether it should be committed or aborted. Since all honest nodes in the shard have a consistent replica of the full sequence of actions, they will all decide the same consistent action following `prepare(T)'. 
    
    Transaction $\transaction$ is to be committed if it is valid according to the usual rules (see \Cref{fig:rules}), in brief: (1) the objects input or referenced by $\transaction$ in the shard are active, (2) there is no other instance of the two-phase commit protocol on-going concerning any of those objects (no locks held) and (3) if $\transaction$ is valid according to the validity rules, and the smart contract checkers in the shard. Only the checkers for types of objects held by the shard are checked by the shard.
    
    If the decision is to commit, the shard broadcasts to all concerned nodes `prepared($T$,commit)', otherwise it broadcasts `prepared($T$, abort)'---along with sufficient signatures to convince any party of the collective shard decision (we denote this action as \LP(*, T)). The objects used or referenced by $T$ are `locked' (\Cref{fig:state}) in case of a `prepared commit' until an `accept' decision on the transaction is reached, and subsequent transactions concerning them will be aborted by the shard. Any subsequent `prepare($T''$)' actions in the sequence are ignored, until a matching accept($T$, abort) is reached to release locks, or forever if the transaction is committed.
    
    \vspace{2mm} \noindent {\bf Process Prepared (accept or abort)}. 
    Depending on the decision of `prepare($T$)', the shard sequences `accept($T$,commit)' or `accept($T$,abort)'  through the atomic commit protocol across all the concerned shards---along with all messages and signatures of the bundle of `prepared' messages relating to $T$ proving to other shards that the decision should be `accept($T$,commit)' or `accept($T$,abort)' according to its local consensus. 
    If it receives even a single `\LP($T$,abort)' from another shard it instead will move to reach consensus on `accept($T$, abort)' (denoted as \SPA(abort,T)). Otherwise, if all the shards respond with `\LP($T$,commit)' it will reach a consensus on \APC(commit,T). The final decision is sent to the user, along with all messages and signatures of the bundle of `accept' messages relating to $T$ proving that the final decision should be to commit or abort according to responses from all concerned shards.   
    
    It is possible, that a shard hears a prepared message for $\transaction$ before a prepare message, due to unreliability, asynchrony or a malicious user. In that case the shard assumes that a `prepare(T)' message is implicit, and sequences it.
    
    \vspace{2mm} \noindent {\bf Process Accept}. When a shard sequences an `accept($T$, commit)' decision, it sets all objects that are inputs to the transaction $T$ as being inactive (\Cref{fig:state}). It also creates any output objects from $T$ via BFT consensus that are to be managed by the shard. If the output objects are not managed by the shard, the shard sends requests to the concerned shards to create the objects. 
    On the other hand if the shard decision is `accept($T$, abort)',  all nodes release locks held on inputs or references of transaction $T$. Thus those objects remain active and may be used by other transactions.

As previously mentioned, some of the messages in \sbac are handled by a designated node in each shard called the \bftinit. Specifically, the \bftinit drives the composed \sbac protocol by sending `prepare(T)' and then `accept($T$, *)' messages to reach BFT consensus within and across shards. It is also responsible for broadcasting consensus decisions to relevant parties. The protocol supports a two-phase process to recover from a malicious \bftinit that suppresses transactions. As nodes in a shard hear all messages, they wait for the \bftinit to act on it until they time out. They first send a reminder to the \bftinit along with the original message to account for network losses. Next they proceed to wait; if they time out again, other nodes perform the action of \bftinit which is idempotent.        

\subsection{Concurrency \& Scalability}


Each transaction $\transaction$ involves a fixed number of \emph{concerned nodes} $\Phi(\transaction)$ within \sysname, corresponding to the shards managing its inputs and references. If two transactions $\transaction_0$ and $\transaction_1$ have disjoint sets of concerned nodes ($\Phi(T_0) \cap \Phi(T_1) = \emptyset$) they cannot conflict, and are executed in parallel or in any arbitrary order. 
If however, two transactions have common input objects, only one of them is accepted by all nodes. This is achieved through the \sbac protocol. It is local, in that it concerns only nodes managing the conflicting transactions, and does not require a global consensus. 

From the point of view of scalability, \sysname capacity grows linearly as more shards are added, subject to transactions having on average a constant, or sub-linear, number of inputs and references (see \Cref{fig:implementation1}). Furthermore, those inputs must be managed by different nodes within the system to ensure that load of accepting transactions is distributed across them. 


\section{Security and Correctness}
\label{theorems}

\subsection{Security \& Correctness of \sbac}

The \sbac protocol guarantees a number of key properties, on which rest the security of \sysname, namely \emph{liveness} \emph{consistency}, and \emph{validity}. Before proceeding with stating those properties in details, and proving them we note three key invariants, that nodes may decide:
\begin{itemize}
\item \LP(commit / abort, T): A node considers that \LP(commit / abort, T) for a shard holds, if it receives at least $\faulty + 1$ distinct signed messages from nodes in the shard, stating `prepared(commit, T)' or `prepared(abort, T)' respectively. As a special case a node automatically concludes \LP(commit / abort, T) for a shard it is a member of, if all the preconditions necessary to provide that answer are present when an `prepare(T)' is sequenced.
\item \APC(commit, T): A node considers that `\APC(commit, T)' holds if it believes that `\LP(commit, T)' holds for all shards with concerned nodes for $T$. Note this may only be decided after reaching a conclusion (e.g.\ through receiving signed messages) about all shards.
\item \SPA(abort, T): A node considers that `\SPA(abort, T)' holds if it believes that `\LP(abort, T)' holds for at least one shard with concerned nodes for $T$. This may be concluded after only reaching a conclusion for a single shard, including the shard the node may be part of.
\end{itemize}

Liveness ensures that transactions make progress once proposed by a user, and no locks are held indefinitely on objects, preventing other transactions from making progress. 
\begin{sbacThm}
Liveness: Under the `honest shards' threat model, a transaction $\transaction$ that is proposed to at least one honest concerned node, will eventually result in either being committed or aborted, namely all parties deciding accept(commit, T) or accept(abort, T). 
\end{sbacThm}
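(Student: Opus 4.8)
The plan is to reduce the cross-shard liveness of a single transaction to a finite sequence of intra-shard Byzantine consensus instances, each of which is guaranteed to terminate under the honest-shards model. Under HS every shard has size at least $3\faulty + 1$ with at most $\faulty$ faulty nodes, so each shard contains at least $2\faulty + 1$ honest nodes; this is exactly the condition under which the \modsmart BFT primitive guarantees that, whenever agreement is sought, a decision is eventually reached. I would take this termination guarantee as the atomic building block and then show that the \sbac composition invokes it only finitely many times before a terminal \APC(commit, T) or \SPA(abort, T) decision is reached at every concerned node.

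First I would argue that the prepare phase propagates to all concerned shards. Starting from the single honest concerned node that received `prepare(T)', that node ensures the message is sequenced in its own shard: either the designated \bftinit sequences it, or---by the two-phase timeout-and-idempotent-recovery mechanism---after timing out the honest node itself performs the initiator's action, which by shard liveness terminates. Once a shard sequences `prepare(T)', it deterministically decides commit or abort (this decision terminates because the checker is a deterministic Boolean function and the validity rules of \Cref{fig:rules} are evaluated over finite local state) and broadcasts `prepared(*, T)' to all concerned shards. A receiving shard that has not yet seen a prepare treats the prepared message as an implicit prepare and sequences it. Hence, by induction over the set of concerned shards---finite, since a transaction is a finite list of traces referencing finitely many objects and therefore finitely many shards---every concerned shard eventually sequences `prepare(T)' and emits its \LP(commit, T) or \LP(abort, T) decision.

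Next I would show that every shard reaches a terminal accept decision. Since every concerned shard eventually broadcasts a signed `prepared' bundle, each shard eventually collects $\faulty + 1$ matching signatures per shard and can evaluate the two mutually exclusive outcomes: if any shard signalled abort then \SPA(abort, T) is decidable from that single shard's bundle, and otherwise, once all shards' commit bundles arrive, \APC(commit, T) holds. In either case the shard sequences the corresponding `accept(T, *)' through BFT---again terminating by shard liveness---and processes it: on commit it marks inputs inactive and creates the outputs it manages, on abort it releases the locks. The final decision, with its supporting signatures, is returned to the user, so all parties decide accept(commit, T) or accept(abort, T).

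The main obstacle is ruling out indefinite blocking, which can arise in two ways. The first is a malicious \bftinit that suppresses a transaction; this is neutralised by the recovery procedure, since honest nodes time out and perform the initiator's idempotent action, and an honest shard always has enough correct nodes to drive its consensus to completion. The second, and more delicate, is the possibility that locks are held forever or form a deadlock across shards. Here the key observation is that \sysname never makes a transaction \emph{wait} on a held lock: by the Process-Prepare rule, a shard that finds an input already locked by another in-flight instance immediately decides to abort rather than queueing, so no circular wait among transactions can form. A lock taken for a transaction is released exactly when that transaction reaches its own accept decision, and by the argument above that decision always terminates; consequently every lock is eventually released or converted into a commit, and the transaction $\transaction$ of the statement always reaches a terminal accept(commit, T) or accept(abort, T) at every party.
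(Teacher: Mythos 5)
Your proof is correct and follows essentially the same route as the paper's: reduce liveness to the termination guarantee of the intra-shard BFT primitive, show that the prepare/prepared exchange propagates to every one of the finitely many concerned shards, and conclude that each shard can evaluate \APC(commit, T) or \SPA(abort, T) and sequence the final accept. You additionally spell out two points the paper's proof leaves implicit---recovery from a faulty \bftinit via the timeout mechanism, and the fact that lock conflicts cause immediate aborts rather than waiting, so no circular wait can block termination---which strengthens rather than changes the argument.
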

\begin{proof}
\small
We rely on the liveness properties of the byzantine agreement (shards with only $\faulty$ nodes will reach a consensus on a sequence), and the broadcast from nodes of shards to all other nodes of shards, including the shards that manage transaction outputs. Assuming prepare(T) has been given to an honest node, it will be sequenced withing an honest shard BFT sequence, and thus a prepared(commit, T) or prepared(abort, T) will be sent from the $2\faulty+1$ honest nodes of this shard, to the $2\faulty+1$ nodes of the other concerned shards. Upon receiving these messages the honest nodes from other shards will schedule a prepare(T) message within their shards, and the BFT will eventually sequence it. Thus the user and all other honest concerned nodes will receive enough `prepared' messages to decide whether to proceed with `\APC(commit, T)' or `\SPA(abort, T)' and proceed with sequencing them through BFT. Eventually, each shard will sequence those, and decide on the appropriate `accept'.
\end{proof}

The second key property ensures that the execution of all transactions could be serialized, and thus is correct.

\begin{sbacThm}
Consistency: Under the `honest shards' threat model, no two conflicting transactions, namely transactions sharing the same input will be committed. Furthermore, a sequential executions for all transactions exists. 
\end{sbacThm}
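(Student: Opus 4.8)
The plan is to prove the two claims in turn, handling the no-double-commit claim first and then constructing an explicit serial order. For the first claim, let $T_0$ and $T_1$ be conflicting transactions sharing a common input object $\csobject$, managed by shard $\shard(\csobject)$. By \Cref{thm:partition} the concerned nodes of both transactions include nodes of $\shard(\csobject)$, so this shard participates in the atomic commit of both. Under the honest-shards model $\shard(\csobject)$ contains at most $\faulty$ faulty nodes, so its Byzantine agreement sequences all `prepare' actions into a single order that every honest member sees identically. I would then follow the object state machine of \Cref{fig:state}: the first sequenced `prepare' that the shard decides to commit moves $\csobject$ from Active to Locked, and by the Process-Prepare rule every later `prepare' touching $\csobject$ is decided abort for as long as the lock is held, with the lock released only on an `accept(abort)'. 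Hence the shard emits $\LP(\text{commit}, \cdot)$ for at most one of $T_0, T_1$. Since committing requires $\APC(\text{commit}, T)$, which in turn requires $\LP(\text{commit}, T)$ from every concerned shard---in particular from $\shard(\csobject)$---at most one of the two conflicting transactions can commit.

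For serializability I would exploit that \sbac realizes a distributed form of strict two-phase locking: every object used \emph{or referenced} by a committing transaction is locked at the point of $\LP(\text{commit}, \cdot)$ and released only at the matching `accept', so no object is ever held simultaneously by two distinct committed transactions. I would define a dependency relation on the committed transactions, setting $T \prec T'$ whenever an output of $T$ is consumed or referenced by $T'$, or whenever $T$ references an object that $T'$ consumes. By \Cref{Thm1} object identifiers created by valid transactions are fresh, each object is produced by exactly one trace and consumed by at most one transaction; the producer-consumer edges therefore form precisely the acyclic Hash-DAG. The reference-consume edges are pinned down by the locking argument above to be consistent with the relevant per-shard BFT sequences, so no cycle can close through them. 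A topological sort of this relation then yields a total order, and I would finish by induction along that order, checking that replaying the committed transactions sequentially reproduces exactly the active-object context each shard observed, so the serial execution is valid under the rules of \Cref{fig:rules}.

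The main obstacle I anticipate is ruling out cycles in the dependency relation once references are admitted. Data-flow (producer-consumer) edges are acyclic for free via the Hash-DAG, but a reference-consume conflict could in principle close a cycle between two transactions that each read an object the other consumes (e.g.\ $T_0$ references $\csobject_1$ and consumes $\csobject_2$ while $T_1$ references $\csobject_2$ and consumes $\csobject_1$). The crux is to show that locking references as well as inputs, and holding the locks from prepared-commit through accept, forces the BFT sequences of $\shard(\csobject_1)$ and $\shard(\csobject_2)$ to agree on the relative placement of such a pair, so that one transaction is necessarily aborted rather than committed and the cycle never arises among committed transactions. Discharging this carefully across the asynchronous, multi-shard setting---and noting it is unaffected by a malicious \bftinit, whose suppression is handled by the idempotent recovery mechanism without changing the sequenced order---is where the real work lies.
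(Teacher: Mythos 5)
Your proof of the first claim is essentially the paper's: both arguments rest on the fact that committing requires \APC(commit, $T$), hence \LP(commit, $T$) from every concerned shard, and that two conflicting transactions necessarily share the honest-majority shard managing the common object, whose BFT sequences the two `prepare' messages into a single order, locks the object on the first prepared-commit, and forces \LP(abort, $\cdot$) for the second (and, after `accept(commit, $T$)', removes the object from the active set so any later conflicting transaction also aborts). Where you diverge is on the serializability claim. The paper's proof stops at ``no interleaving commits both conflicting transactions'' and leaves the existence of a serial order essentially implicit; you go further and sketch an explicit construction---a dependency relation combining producer--consumer edges (acyclic via \Cref{Thm1} and the Hash-DAG) with reference--consume edges, followed by a topological sort and an inductive replay against the rules of \Cref{fig:rules}. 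This is a genuinely more rigorous route, and it correctly isolates the one step the paper never addresses: that read--write conflicts through references cannot close a cycle among committed transactions. You flag that step as ``where the real work lies'' rather than discharging it, so your argument for the second claim is not complete---but it is more complete, and more candid about what remains, than the paper's, which offers nothing for serializability beyond the pairwise no-double-commit property. To close the gap, the lever is exactly the one you name: \sbac locks references as well as inputs from prepared-commit until the matching accept, so committed transactions obey strict two-phase locking, and the standard argument that strict 2PL histories are conflict-serializable rules out the reference--consume cycles directly.
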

\begin{proof}
\small
A transaction is committed only if some nodes conclude that `\APC(commit, T)', which presupposes all shards have provided enough evidence to conclude `\LP(commit, T)' for each of them. Two conflicting transaction, sharing an input or reference, must share a shard of at least $3\faulty+1$ concerned nodes for the common object---with at most $\faulty$ of them being malicious. Without loss of generality upon receiving the prepare(T) message for the first transaction, this shard will sequence it, and the honest nodes will emit messages for all to conclude `\APC(commit, T)'---and will lock this object until the two phase protocol concludes. Any subsequent attempt to prepare(T') for a conflicting T' will result in a \LP(abort, T') and cannot yield a commit, if all other shards are honest majority too. After completion of the first `accept(commit, T)' the shard removes the object from the active set, and thus subsequent T' would also lead to \SPA(abort, T'). Thus there is no path in the chain of possible interleavings of the executions of two conflicting transactions that leads to them both being committed.
\end{proof}

\begin{sbacThm}
Validity: Under the `honest shards' threat model, a transaction may only be committed if it is valid according to the smart contract checkers matching the traces of the procedures it executes. 
\end{sbacThm}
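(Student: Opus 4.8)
The plan is to trace the commit decision back to the checkers by following the chain of invariants \APC, \LP, and \SPA, and then to argue coverage of every contract via \Cref{thm:partition}. First I would observe that, by the \sbac protocol, a transaction $\transaction$ is committed only when some honest node concludes `\APC(commit, T)'; by the definition of that invariant this presupposes that `\LP(commit, T)' holds for every shard containing concerned nodes for $\transaction$. So it suffices to show that the conjunction of these per-shard decisions entails that every checker matching a trace of $\transaction$ returned $\mathsf{true}$, which is precisely the checking rule of \Cref{fig:rules}.

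Next I would unpack `\LP(commit, T)' for a single shard. A node concludes it only upon collecting at least $\faulty+1$ distinct signed prepared(commit, T) messages from that shard. Under the honest-shards threat model a shard holds at most $\faulty$ faulty nodes, so by a pigeonhole argument at least one of those signatures originates from an honest node. An honest node, during the ``Process Prepare'' step, emits prepared(commit, T) only after verifying---for the objects and types it manages---the checking rule: it runs the checker $\contract.\checker$ for every trace whose contract $\contract$ defines a type it holds, confirms its inputs and references are active, and verifies the side conditions. Hence `\LP(commit, T)' for a shard certifies that all checkers for contracts managed by that shard accepted their traces.

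I would then lift this from a single shard to the whole transaction using \Cref{thm:partition}: for every contract $\contract$ appearing in any trace of $\transaction$ there exists an object $\csobject$ with $\type(\csobject) \in \types(\contract)$ whose shard intersects the concerned-node set $\Phi(\transaction)$, and that shard is honest under HS. Because commit requires `\LP(commit, T)' for all such shards, the checker of every contract occurring in $\transaction$ is run and accepted by at least one honest node. Assembling these per-contract guarantees over all traces---proceeding by the same depth-first recursion over dependencies used in the checking rule, whose termination is guaranteed by the acyclicity of the Hash-DAG---yields that $\transaction$ satisfies the checking rule in full, i.e.\ that it is valid.

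The hard part will be the coverage step: I must rule out that some trace's checker is silently skipped because no honest node happens to manage an object of that contract's type. This is exactly what \Cref{thm:partition} forecloses, and it is the linchpin of the argument. The subtler accompanying point is that each honest checker must be invoked with the trace parameters fixed by the Hash-DAG identifiers, so that an honest node's view of $(\contract, \csprocedure, \pinputs, \preferences, \poutputs, \lparams, \lreturns, \dependencies)$ cannot be altered by a malicious initiator---preventing the adversary from having an honest node certify a trace different from the one ultimately committed.
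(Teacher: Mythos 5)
Your proposal is correct and follows essentially the same route as the paper's proof: commit forces \APC(commit, T), hence \LP(commit, T) at every concerned shard, and the coverage of every contract's checker by some honest shard is exactly the paper's appeal to the concerned-nodes property (your explicit citation of \Cref{thm:partition} just names what the paper uses implicitly). Your added details---the pigeonhole argument over the $\faulty+1$ signatures and the remark on traces being bound by their Hash-DAG identifiers---are sound refinements of, not departures from, the paper's argument.
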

\begin{proof}
\small
A transaction is committed only if some nodes conclude that `\APC(commit, T)', which presupposes all shards have provided enough evidence to conclude `\LP(commit, T)' for each of them. The concerned nodes include at least one shard per input or reference object for the transaction; for any contract $\contract$ represented in the transaction, at least one of those shards will be managing object from that contract. Each shard checks the validity rules for the objects they manage (ensuring they are active, and not locked) and the contracts those objects are part of (ensuring the calls to $\contract$ pass its checker) in order to \LP(accept, T). Thus if all shards say \LP(accept, T) to conclude that `\APC(commit, T)', all object have been checked as active, and all the contract calls within the transaction have been checked by at least one shard---whose decision is honest due to at most $\faulty$ faulty nodes. If even a single object is inactive or locked, or a single trace for a contract fails to check, then the honest nodes in the shard will emit `prepared(abort, T)' upon sequencing `prepare(T)', and the final decision will be `\SPA(abort, T)'.
\end{proof}

\subsection{Auditability}

In the previous sections we show that if each shard contains at most $\faulty$ faulty nodes (honest shard model), the \sbac protocol guarantees consistency and validity. In this section we argue that if this assumption is violated, i.e.\ one or more shards contain more than $\faulty$\ byzantine nodes each, then honest shards can detect faulty shards. Namely, enough auditing information is maintained by honest nodes in \sysname to detect inconsistencies and attribute them to specific shards (or nodes within them).

The rules for transaction validity are summarized in \Cref{fig:rules}. Those rules are checked in a distributed manner: each shard keeps and checks the active or inactive state of objects assigned to it; and also only the contract checkers corresponding to the type of those objects. An honest shard emits a proposed(T, commit) for a transaction T only if those checks pass, and proposed(T, abort) otherwise or if there is a lock on a relevant object. A dishonest shard may emit proposed(T, *) messages arbitrarily without checking the validity rules. By definition, an invalid transaction is one that does not pass one or more of the checks defined in \Cref{fig:rules} at a shared, for which the shard has erroneously emitted a proposed(T, commit) message. 

\begin{SecThm}
Auditability: A malicious shard (with more than \faulty\ faulty nodes) that attempts to introduce an invalid transaction or object into the state of one or more honest shards, can be detected by an auditor performing a full audit of the \sysname system.
\end{SecThm}
\begin{proof}
\small
We consider two hash-chains from two distinct shards. We define the pair of them as being valid if (1) they are each valid under full audit, meaning that a re-execution of all their transactions under the messages received yields the same decisions to commit or abort all transactions; and (2) if all prepared(T,*) messages in one chain are compatible with all messages seen in the other chain. In this context `compatible' means that all prepared(T,*) statements received in one shard from the other represent the `correct' decision to commit or abort the transaction T in the other shard. An example of incompatible message would result in observing a proposed(T, commit) message being emitted from the first shard to the second, when in fact the first shard should have aborted the transaction, due to the checker showing it is invalid or an input being inactive.

Due to the property of digital signatures (unforgeability and non-repudiation), if two hash-chains are found to be `incompatible', one belonging to an honest shard and one belonging to a dishonest shard, it is possible for everyone to determine which shard is the dishonest one. To do so it suffices to isolate all statements that are signed by each shard (or a peer in the shard)---all of which should be self-consistent. It is then possible to show that within those statements there is an inconsistency---unambiguously implicating one of the two shards in the cheating. Thus, given two hash-chains it is possible to either establish their consistency, under a full audit, or determine which belongs to a malicious shard. 
\end{proof}

Note that the mechanism underlying tracing dishonest shards is an instance of the age-old double-entry book keeping\footnote{The first reported use is 1340AD~\cite{lauwers1994five}.}: shards keep records of their operations as a non-repudiable signed hash-chain of checkpoints---with a view to prove the correctness of their operations. They also provide non-repudiable statements about their decisions in the form of signed proposed(T,*) statements to other shards. The two forms of evidence must be both correct and consistent---otherwise their misbehaviour is detected.

\section{System and Applications Smart Contracts}
\label{applications}
\subsection{System Contracts} \label{System Contracts}

The operation of a \sysname distributed ledger itself requires the maintenance of a number of high-integrity high-availability data structures. Instead of employing an ad-hoc mechanism, \sysname employs a number of \emph{system smart contracts} to implement those.  Effectively, instantiation of \sysname is the combination of nodes running the basic \sbac protocol, as well as a set of system smart contracts providing flexible policies about managing shards, smart contract creation, auditing and accounting. This section provides an overview of system smart contracts.

\vspace{2mm} \noindent {\bf Shard management.} The discussion of \sysname so far, has assumed a function $\shard(\csobject)$ mapping an object $\csobject$ to nodes forming a shard. However, how those shards are constituted has been abstracted. A smart contract \textsf{ManageShards} is responsible for mapping nodes to shards. \textsf{ManageShards} initializes a singleton object of type \textsf{MS.Token} and provides three procedures: \textsf{MS.create} takes as input a singleton object, and a list of node descriptors (names, network addresses and public verification keys), and creates a new singleton object and a \textsf{MS.Shard} object representing a new shard; \textsf{MS.update} takes an existing shard object, a new list of nodes, and $2\faulty+1$ signatures from nodes in the shard, and creates a new shard object representing the updated shard. Finally, the \textsf{MS.object} procedure takes a shard object, and a non-repudiable record of malpractice from one of the nodes in the shard, and creates a new shard object omitting the malicious shard node---after validating the misbehaviour. Note that \sysname is `open' in the sense that any nodes may form a shard; and anyone may object to a malicious node and exclude it from a shard.

\vspace{2mm} \noindent {\bf Smart-contract management.} \sysname is also `open' in the sense that anyone may create a new smart contract, and this process is implemented using the \textsf{ManageContracts} smart contract. \textsf{ManageContracts} implements three types: \textsf{MC.Token}, \textsf{MC.Mapping} and \textsf{MC.Contract}. It also implements at least one procedure, \textsf{MC.create} that takes a binary representing a checker for the contract, an initialization procedure name that creates initial objects for the contract, and the singleton token object. It then creates a number of outputs: one object of type \textsf{MC.Token} for use to create further contracts; an object of type \textsf{MC.Contract} representing the contract, and containing the checker code, and a mapping object \textsf{MC.mapping} encoding the mapping between objects of the contract and shards within the system. Furthermore, the procedure \textsf{MC.create} calls the initialization function of the contract, with the contract itself as reference, and the singleton token, and creates the initial objects for the contract.

Note that this simple implementation for \textsf{ManageContracts} does not allow for updating contracts. The semantics of such an update are delicate, particularly in relation to governance and backwards compatibility with existing objects. We leave the definitions of more complex, but correct, contracts for managing contracts as future work. In our first implementation we have hardcoded \textsf{ManageShards} and \textsf{ManageContracts}.

\vspace{2mm} \noindent {\bf Payments for processing transactions.} \sysname is an open system, and requires protection againt abuse resulting from overuse. To achieve this we implement a method for tracking value through a contract called \textsf{CSCoin}.

The \textsf{CSCoin} contract creates a fixed initial supply of coins---a set of objects of type The \textsf{CSCoin.Account} that may only be accessed by a user producing a signature verified by a public key denoted in the object. A \textsf{CSCoin.transfer} procedure allows a user to input a number of accounts, and transfer value between them, by producing the appropriate signature from incoming accounts. It produces a new version of each account object with updated balances. This contract has been implemented in Python with approximately 200 lines of code.

The \textsf{CSCoin} contract is designed to be composed with other procedures, to enable payments for processing transactions. The transfer procedure outputs a number of local returns with information about the value flows, that may be used in calling contracts to perform actions conditionally on those flows. Shards may advertise that they will only consider actions valid if some value of \textsf{CSCoin} is transferred to their constituent nodes. This may apply to system contracts and application contracts.

\subsection{ Application level smart contracts}
This section describes some examples of privacy-friendly smart contracts and showcases how smart contract creators may use \sysname to implement advanced privacy mechanisms.

\vspace{2mm} \noindent {\bf Smart-Meter Private Billing.} 

We implement a basic private smart-meter billing mechanism~\cite{DBLP:conf/pet/JawurekJK11,DBLP:conf/isse/RialD12} using the contract \textsf{SMet}: it implements three types \textsf{SMet.Token}, \textsf{SMet.Meter} and \textsf{SMet.Bill}; and three procedures, \textsf{SMet.createMeter}, \textsf{SMet.AddReading}, and \textsf{SMet.computeBill}. The procedure \textsf{SMet.createMeter} takes as input the singletone token and a public key and signature as local parameters, and it outputs a \textsf{SMet.Meter} object tied to this meter public key if the signature matches. \textsf{SMet.Meter} objects represent a collection of readings and some meta-data about the meter. Subsequently, the meter may invoke \textsf{SMet.addReading} on a \textsf{SMet.Meter} with a set of cryptographic commitments readings and a period identifier as local parameters, and a valid signature on them. A signature is also included and checked to ensure authenticity from the meter. A new object \textsf{SMet.Meter} is output appending the list of new readings to the previous ones. Finally, a procedure \textsf{SMet.computeBill} is invoked with a \textsf{SMet.Meter} and local parameters a period identifier, a set of tariffs for each reading in the period, and a zero-knowledge proof of correctness of the bill computation. The procedure outputs a \textsf{SMet.Bill} object, representing the final bill in plain text and the meter and period information.

This proof of correctness is provided to the checker---rather than the secret readings---which proves that the readings matching the available commitments and the tariffs provided yield the bill object. The role of the checker, which checks public data, in both those cases is very different from the role of the procedure that is passed secrets not available to the checkers to protect privacy. 
This contracts has been implemented in about 200 lines of Python code and is evaluated in section \Cref{evaluation}.

\vspace{2mm} \noindent {\bf A Platform for Decision Making.}
An additional example of \sysname's privacy-friendly application is a smart voting system. We implement the contract \textsf{SVote} with three types, \textsf{SVote.Token}, \textsf{SVote.Vote} and \textsf{SVote.Tally}; and three procedures. 

\textsf{SVote.createElection}, consumes a singleton token and takes as local parameters the options, a list of all voter's public key, the tally's public key, and a signature on them from the tally. It outputs a fresh \textsf{SVote.Vote} object, representing the initial stage of the election (all candidates having a score of zero) along with a zero-knowledge proof asserting the correctness of the initial stage.

\textsf{SVote.addVote}, is called on a \textsf{SVote.Vote} object and takes as local parameters a new vote to add, homomorphically encrypted and signed by the voter. In addition, the voter provides a zero-knowledge proof certifying that her vote is a binary value and that she voted for exactly one option. The voter's public key is then removed from the list of participants to ensure that she cannot vote more than once. If all proofs are verified by the checker and the voter's public key appears in the list, a new \textsf{SVote.Vote} object is created as the homomorphic addition of the previous votes with the new one. Note that the checker does not need to know the clear value of the votes to assert their correctness since it only has to verify the associated signatures and zero-knowledge proofs.

Finally, the procedure \textsf{SVote.tally} is called to threshold decrypt the aggregated votes and provide a \textsf{SVote.Tally} object representing the final election's result in plain text, along with a proof of correct decryption from the tally. The \textsf{SVote} contract's size is approximately 400 lines.












\section{Implementation \& Evaluation}
\label{evaluation}

We implemented a prototype of \sysname in $\sim$10K lines of Python and Java code. The implementation consists of two components: a Python contracts environment and a Java node. We have released the code as an open-source project on GitHub.\footnote{URL omitted for double-blind review.}

\vspace{2mm} \noindent {\bf Python Contract Environment.}
The Python contracts environment allows developers to write, deploy and test smart contracts. These are deployed on each node by running the Python script for the contract, which starts a local web service for the contract's checker. The contract's checker is then called though the web service. The environment provides a framework to allow developers to write smart contracts with little worry about the underlying implementation, and provides an auto-generated checker for simple contracts.

\vspace{2mm} \noindent {\bf Java Node Implementation.}
The Java node implements a shard replica that accepts incoming transactions from clients and initiates, and executes, the \sbac protocol. For BFT consensus, we use the \bftsmart~\cite{bftsmart} Java library---one of the very few maintained open source libraries implementing byzantine consensus.

\begin{figure}[!t]
    \centering
    \includegraphics[width=.48\textwidth]{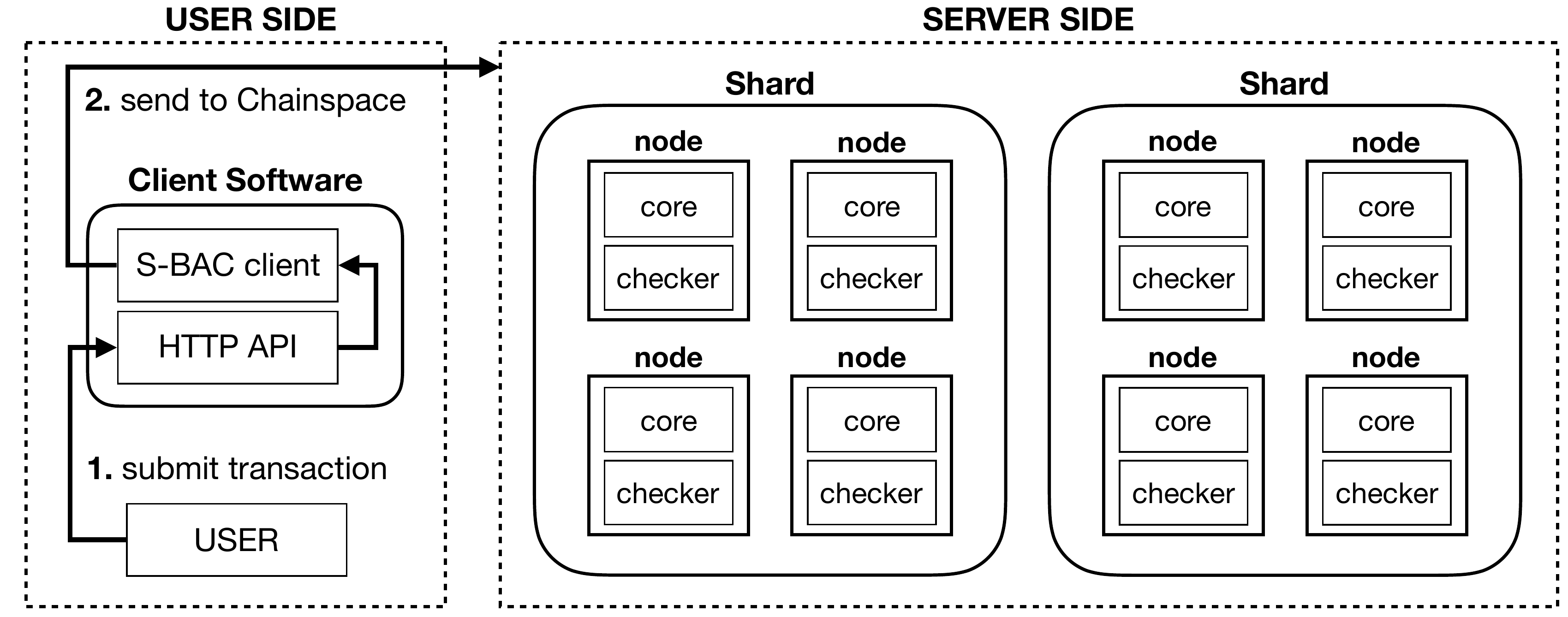}
    \caption{Diagram illustrating the implementation of a \sysname system with two shards managing four nodes each. The user submits the transaction to its local \sbac client through a built-in HTTP API (arrow 1). Then, this \sbac client sends the transaction to \sysname (arrow 2).}
    \label{fig:implementation}
\end{figure}

To communicate with \sysname, end users also run an \sbac--enabled client. First, she creates a transaction through the Python environments using one or many existing smart contracts.
She then submits the transaction to its \sbac client through the HTTP API as indicated in \Cref{fig:implementation}, that sends the transaction to \sysname according to the \bftsmart   protocol.
 
A node is composed of a server divided in two parts: the core and the checker. To communicate with other nodes, each node also contains an \sbac client. When a transaction is received, the core is in charge of verifying that the input objects and references are active (neither locked nor inactive).  Then, the node runs the checker associated with the contract, in an isolated container. (The checker is provided by the contract's creator when the node starts up, and interfaces with the node through an HTTP API.) When the client submits a transaction with dependencies, the core recursively checks each dependent transaction first, and the top-level transaction at last (similar to depth-first search algorithm). 

\vspace{2mm} \noindent {\bf Performance Measurements.} We evaluated the performances and scalability of our implementation of \sysname, through deployments on Amazon EC2 containers. We launched up to 96 nodes on \emph{t2.medium} virtual machines, each containing 8 GB of RAM on 2 virtual CPUs and running GNU/Linux Debian 8.1. We sent transactions to the network from a \sysname client running on a t2.xlarge virtual machine, containing 16 GB of RAM and 4 virtual CPUs, also running GNU/Linux Debian 8.1. In our tests, we map objects to shards randomly using the mapping function $\phi(o) = id(o) \mod K$ where $K$ is a constant representing the number of shards and $id(o)$ is the SHA256 hash of the object.

We first measure the effect of the number of shards on transaction throughput (\Cref{fig:implementation1}). The transaction throughput of \sysname scales linearly with the number of shards: with 4 nodes per shard, the number of transactions per second (t/s) increases on average by 22 for 1-input transactions for each shard added. This is because as inputs are randomly assigned to shards based on their hashes, the transaction processing load is spread out over a larger number of shards.

Next we investigate the effect of shard size (the number of nodes per shard) on transaction throughput (\Cref{fig:implementation4}). We fix the number of shards to 2, and increase the number of nodes per shard from 2 to 48. With \bftsmart  configured for $3f+1$ fault tolerance, we observe an expected gracious decrease in transaction throughput: for each node added, the throughput reduces on average by 1.6 transactions per second. This is because in order for a \bftsmart  node to realise consensus for a message, it must receive a result from at least $f+1$ nodes. Thus, the bottleneck is the latency of the $f+1$th node with the highest response time. \mustafa{...}

Another factor that can potentially affect transaction throughput is the number of inputs per transaction: the more shards touched by the transaction inputs, the longer it will take to run \sbac among all the concerned shards. In \Cref{fig:implementation2}, we study how the number of inputs per transaction affects transaction throughput. We measure this for 5 shards, varying the number of inputs per transaction from 1 to 10, and the inputs are randomly mapped to shards as previously stated. The transaction throughput decreases asymptotically until it becomes stable at around 40 transactions per second. This is because \sbac's maximum time in processing transactions is capped at the time it takes to process transactions that touch all the 5 shards. Increasing the number of inputs does not further deteriorate the transaction throughput.

Finally, we measure the client-perceived latency---the time from when a client submits a transaction until it receives a decision about whether the transaction has been committed---under varying system loads expressed in terms of transactions received per second.
\Cref{fig:implementation3} shows the effect of transactions received by the system per second (all 1-input transactions) on client-perceived latency for 2 shards, each having 4 nodes. Recall from \Cref{fig:implementation1} that the average throughput for a \sysname system with similar configuration is 75 1-input transactions per second. Consequently, we observe in \Cref{fig:implementation4} that the increase in latency with varying system loads is smaller for 20~t/s--60~t/s (average 69~ms), but the values start to get bigger after 60~t/s (average 210~ms). This is when the system reaches its maximum transaction throughput, causing a backlog of transactions to be processed.

\begin{figure}[t]
    \centering
    \includegraphics[width=.48\textwidth]{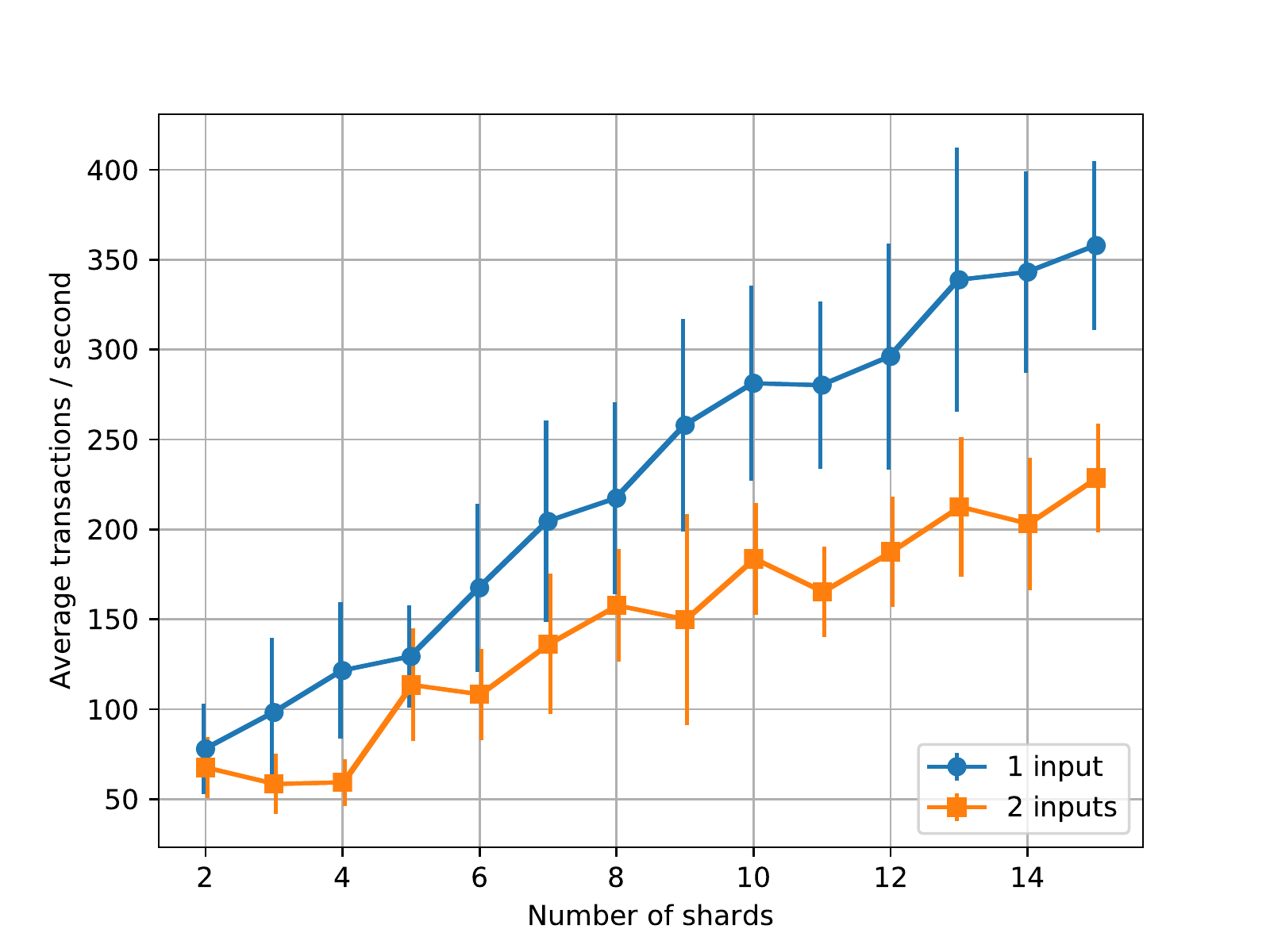}
    \caption{The effect of the number of shards on transaction throughput. (Shards: 2, nodes per shard: 4, input-to-shard mapping: random. Repeats: 20.)}
    \label{fig:implementation1}
\end{figure}

\begin{figure}[t]
    \centering
    \includegraphics[width=.48\textwidth]{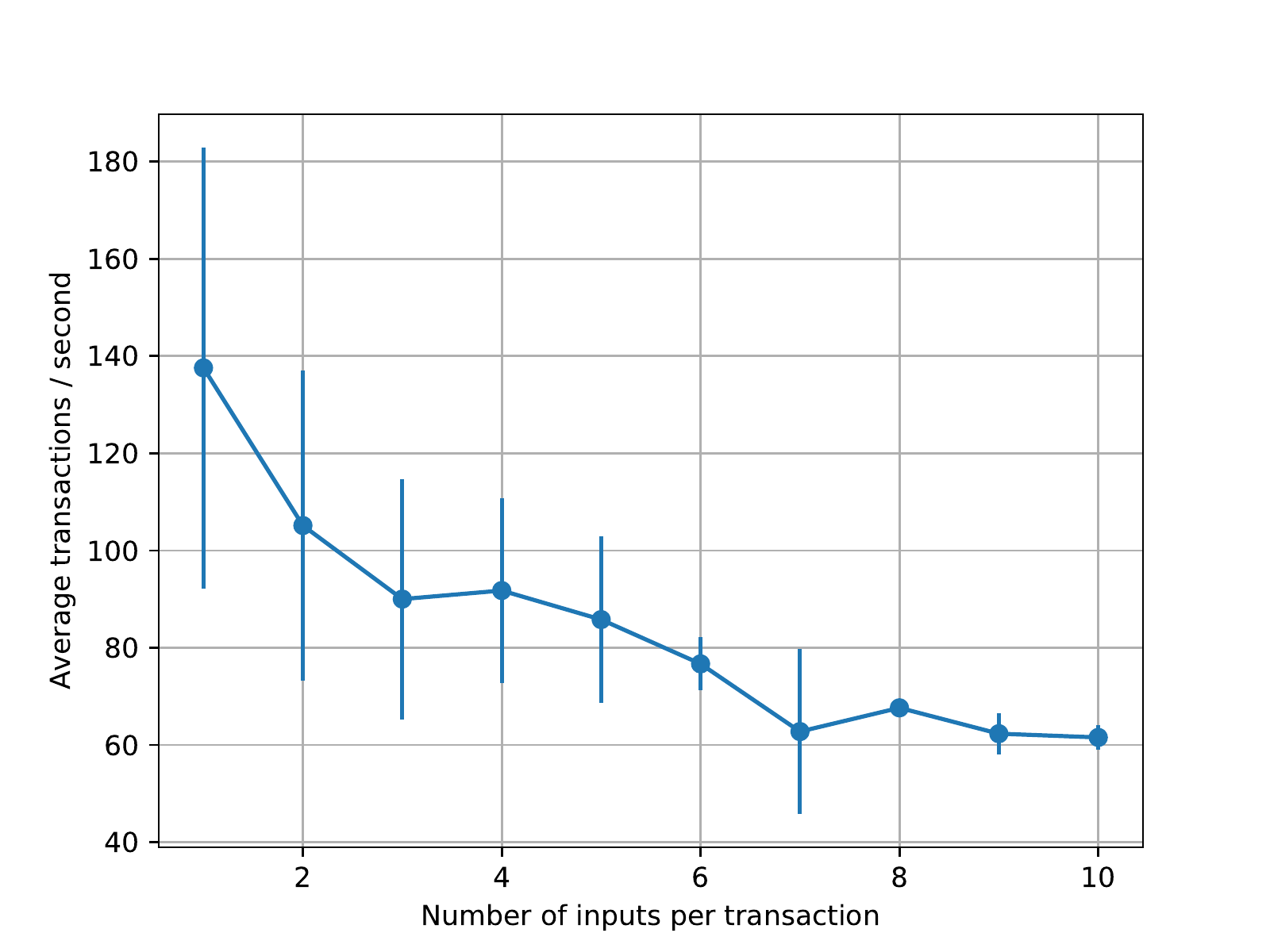}
    \caption{The effect of the number of inputs per transaction on transaction throughput. (Shards: 2, nodes per shard: 4, input-to-shard mapping: random. Repeats: 20.)}
    \label{fig:implementation2}
\end{figure}

\begin{figure}[t]
    \centering
    \includegraphics[width=.48\textwidth]{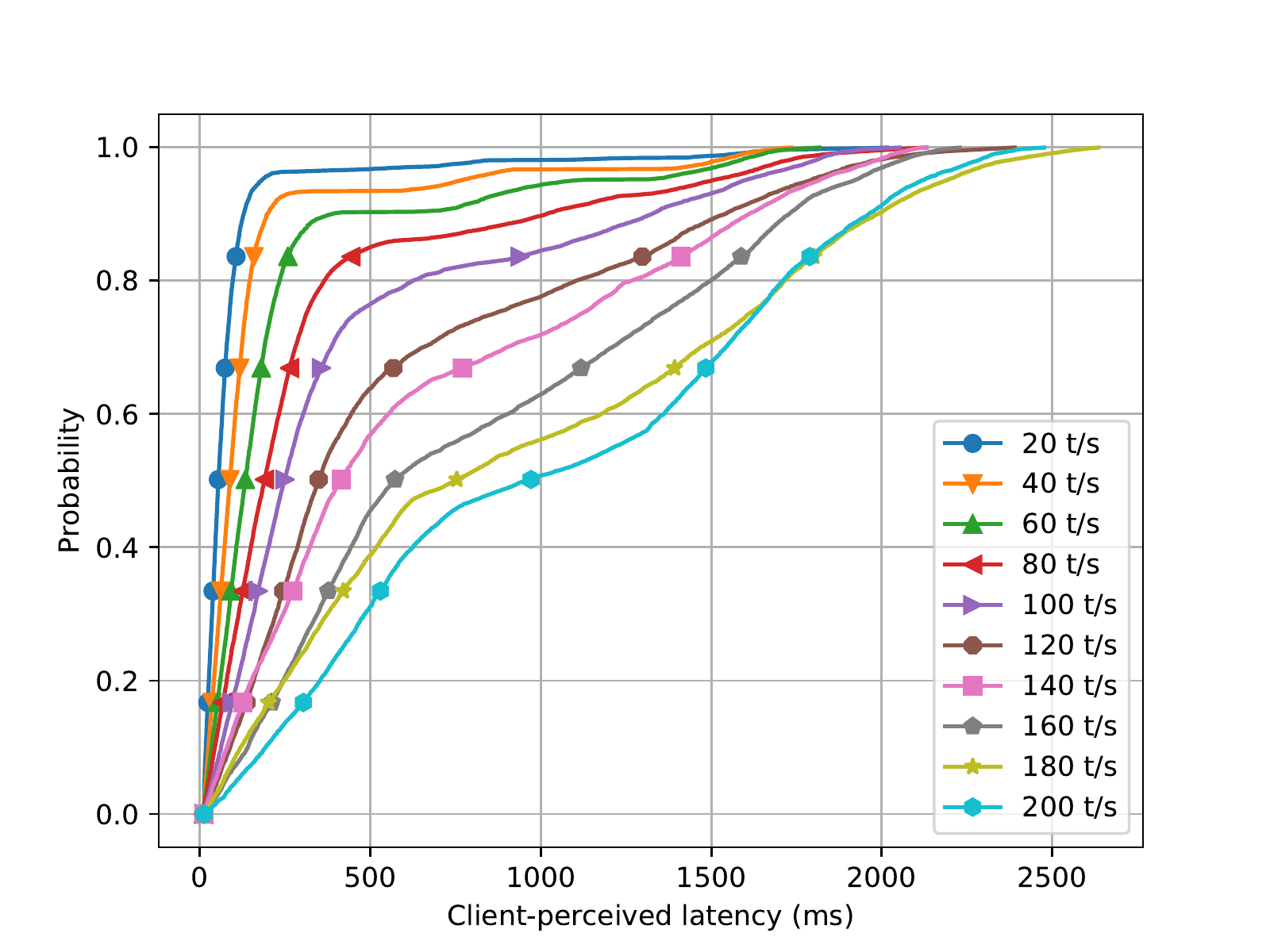}
    \caption{The cumulative distribution function of delay for the client to receive a final commit or abort response, for varying system load. (Shards: 2, nodes per shard: 4, inputs per transaction: 1, input-to-shard mapping: random. Repeats: 5.)}
    \label{fig:implementation3}
\end{figure}

\begin{figure}[t]
    \centering
    \includegraphics[width=.48\textwidth]{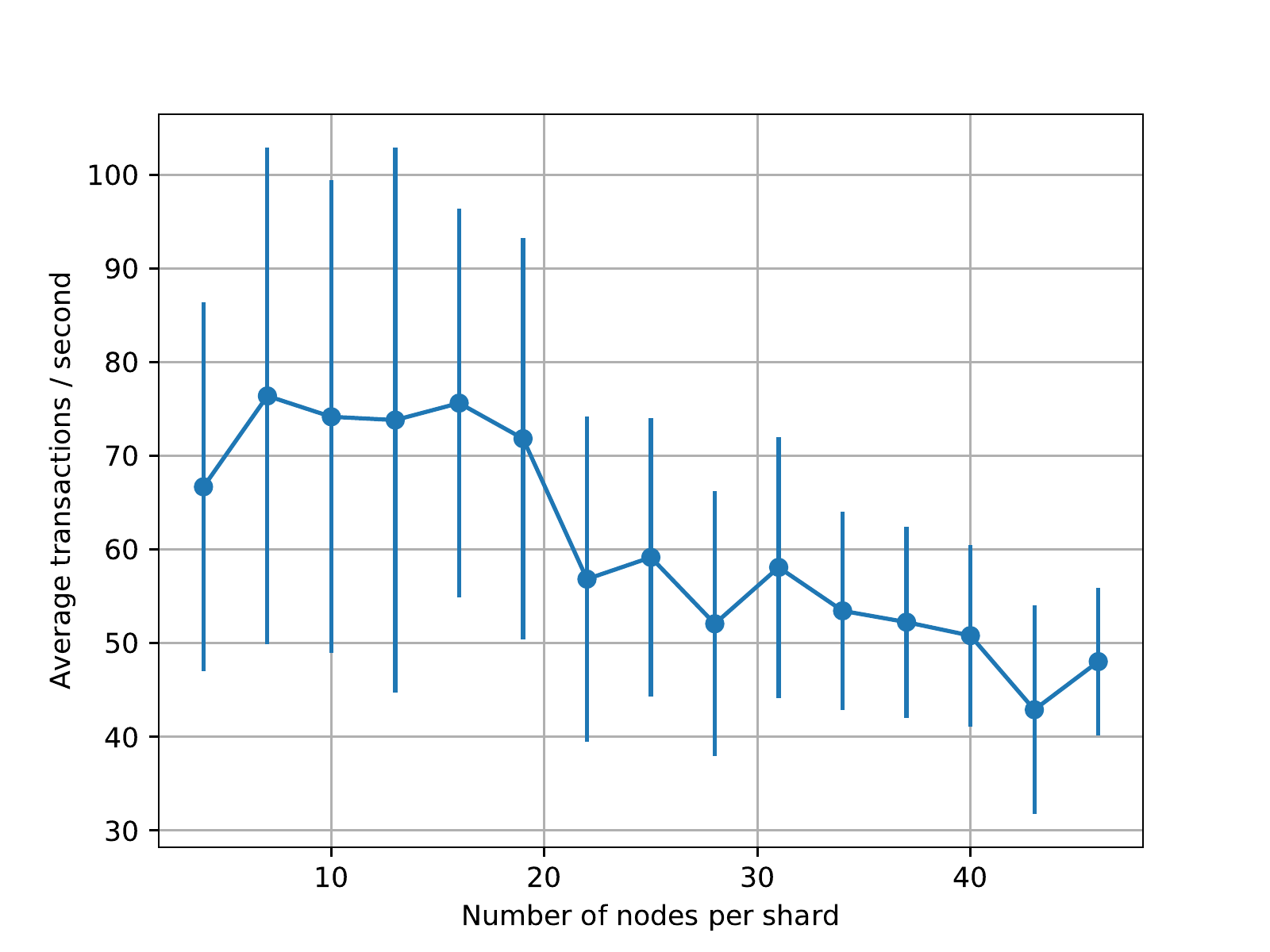}
    \caption{The effect of the number of nodes per shard on transaction throughput. (Shards: 2, inputs per transaction: 1, input-to-shard mapping: random. Repeats: 20.)}
    \label{fig:implementation4}
\end{figure}

\vspace{2mm} \noindent {\bf Smart Contract Benchmarks.}
We evaluate the cost and performance of some smart contracts described in \Cref{System Contracts}. We compute the mean ($\mu$) and standard deviation ($\sigma$) of the execution of each procedure (denoted as [g]) and checker (denoted as [c]) in the contracts. Each figure is the result of 10,000 measured on a dual-core Apple MacBook Pro 4.1, 2.7GHz Intel Core i7. The last column indicates the transaction's size resulting from executing the procedure. All cryptographic operations as digital signatures and zero-knowledge proofs have been implemented using the Python library petlib~\cite{petlib}, wrapping OpenSSL.

\vskip.3cm
\begin{center}
    \footnotesize
    \tabulinesep=1pt
    \begin{tabu} to 0.47\textwidth {  X[1.4cm,r] X[c] X[c] X[r] }
        \multicolumn{3}{l}{ \textsf{CSCoin}---Contract size: $\sim$200 lines}\\
        \multicolumn{1}{l}{Operation} & \multicolumn{1}{c}{$\mu$ [ms]} & \multicolumn{1}{r}{$\sigma$ [ms]} & \multicolumn{1}{r}{size [B]}\\
        \hline
        \textsf{createAccount} [g] & 4.845 & $\pm$ 0.683 & 512\\ 
        \textsf{ }[c] & 0.022 & $\pm$ 0.005 & -\\ 
        \textsf{authTransfer} [g] & 4.986 & $\pm$ 0.684 & 1114\\  
        \textsf{ }[c] & 5.750 & $\pm$ 0.474 & -\\  
        \hline
    \end{tabu}
\end{center}
The user needs to generate a signing key pair to create an account in the \textsf{CSCoin} contract, which takes about 5~ms. However, verifying the account creation only requires to check the transaction's format, and it is therefore very fast. Transferring money is a little more expensive due to the need to sign the amount transferred and the beneficiary, and verifying the signature in the checker.

\vskip.3cm
\begin{center}
    \footnotesize
    \tabulinesep=1pt
    \begin{tabu} to 0.47\textwidth {  X[1.2cm,r] X[c] X[c] X[r] }
        
        \multicolumn{3}{l}{ \textsf{SMet}---Contract size: $\sim$200 lines}\\
        \multicolumn{1}{l}{Operation} & \multicolumn{1}{c}{$\mu$ [ms]} & \multicolumn{1}{c}{$\sigma$ [ms]} & \multicolumn{1}{r}{size [B]}\\
        \hline
        \textsf{createMeter} [g] & 4.786 & $\pm$ 0.480 & $\sim$600\\ 
        \textsf{} [c] & 0.060 & $\pm$ 0.003 & -\\ 
        \textsf{addReading} [g] & 5.286 & $\pm$ 0.506 & $\sim$1100\\  
        \textsf{} [c] & 5.965 & $\pm$ 0.697 & -\\  
        \textsf{computeBill} [g] & 5.043 & $\pm$ 0.513 & $\sim$1100\\  
        \textsf{} [c] & 5.870 & $\pm$ 0.603 & -\\  
        \hline
    \end{tabu}
\end{center}
Similarly to \textsf{CSCoin}, creating a meter requires generating a cryptographic key pair which takes about 5~ms, while verifying the meter's creation is faster and only requires checking the transaction's format. Adding new readings takes about 5~ms, as the user needs to create a signed commitment of the readings which requires elliptic curve operations and an ECDSA signature. Computing the bill takes slightly longer (5.8~ms), and involves homomorphic additions, and verifying the bill involves checking a zero-knowledge proof of the billing calculation.

\vskip.3cm
\begin{center}
    \footnotesize
    \tabulinesep=1pt
    \begin{tabu} to 0.47\textwidth {  X[1.4cm,r] X[c] X[c] X[r] }
        \multicolumn{3}{l}{ \textsf{SVote}---Contract size: $\sim$400 lines}\\
        \multicolumn{1}{l}{Operation} & \multicolumn{1}{c}{$\mu$ [ms]} & \multicolumn{1}{r}{$\sigma$ [ms]} & \multicolumn{1}{r}{size [B]}\\
        \hline 
        \textsf{createElection} [g] & 11.733 & $\pm$ 1.028 & $\sim$1227\\ 
        \textsf{} [c] & 11.327 & $\pm$ 0.782 & -\\ 
        \textsf{addVote} [g] & 14.086 & $\pm$ 1.043 & $\sim$2758\\  
        \textsf{} [c] & 28.178 & $\pm$ 1.433 & -\\ 
        \textsf{tally} [g] & 253.286 & $\pm$ 7.793 & $\sim$1264\\  
        \textsf{} [c] & 11.589 & $\pm$ 0.937 & -\\ 
        \hline
    \end{tabu}
\end{center}
The \textsf{SVote} contract is more expensive than the others since it extensively uses zero-knowledge proofs and more advanced cryptography. For simplicity, this smart contract has been tested with three voters and two options. First of all, creating a new election event requires building a signed homomorphic encryption of the initial value for each option, and a zero-knowledge proof asserting that the encrypted value is zero; this takes roughly 11~ms to generate the transaction and to run the checker. Next, each time a vote is added, the user proves two zero-knowledge statements---one asserting that she votes for exactly one option and one proving that her vote is a binary value---and computes an ECDSA signature on her vote, which takes about 11~ms and generates a transaction of about 2.7~kB. Verifying the signature and the two zero-knowledge proofs are slower and takes about 30~ms. Finally, tallying is the slowest operation since it requires to decrypt the homomorphic encryption of the votes' sum.

\section{Limitations}
\label{limits}
\sysname has a number of limitations, that are beyond the scope of this work to 
tackle, and deferred to future work.

The integrity properties of \sysname rely on all shards managing objects being honest, namely containing at most $\faulty$  fault nodes each. We have chosen to let any set of nodes can create a shard. However, this means that the function $\phi(o)$ mapping objects to shards must avoid dishonest shards. Our isolation properties ensure that a dishonest shard can at worse affect state from contracts that have objects mapped to it. Thus, in \sysname, we opt to allow the contract creator to designate which shards manage objects from their contract. This embodies specific trust assumptions where users have to trust the contract creator both for the code (which is auditable) and also for the choice of shards to involve in transactions---which is also public.
    
In case one or more shards are malicious, we provide an auditing mechanism for honest nodes in honest shards to detect 
the inconsistency and to trace the malicious shard. Through the Hash-DAG structure it is also possible to fully audit the histories of two objects, and to ensure that the validity rules hold jointly---in particular the double-use rules. However, it is not clear how to automatically recover from detecting such an inconsistency. Options include: forcing a fork into one or many consistent worlds; applying a rule to collectively agree the canonical version; patching past transactions to recover consistency; or agree on a minimal common consistent state. Which of those options is viable or best is left as future work.

Checkers involved in validating transactions can be costly. For this reason we allow peers in a shard to accept transactions subject to a \textsf{SCCoin} payment to the peers. However, this `flat' fee is not dependent on the cost or complexity of running the checker which might be more or less expensive. Etherium~\cite{wood2014ethereum} instead charges `gas' according to the cost of executing the contract procedure---at the cost of implementing their own virtual machine and language.
    
Finally, the \sbac protocol ensures correctness in all cases. However, under high contention for the same object the rate of aborted transactions rises. This is expected, since the \sbac protocol in effect implements a variant of optimistic concurrency control, that is known to result in aborts under high contention. There are strategies for dealing with this in the distributed systems literature, such as locking objects in some conventional order---however none is immediately applicable to the byzantine setting.

\section{Comparisons with Related Work}
\label{related}

Bitcoin's underlying blockchain technology suffers from scalability issues: with a current block size of 1MB and 10 minute inter-block interval, throughput is capped at about 7 transactions per second, and a client that creates a transaction has to wait for about 10 minutes to confirm. In contrast, mainstream payment processing companies like Visa confirm transactions within a few seconds, and have high throughput of 2000 transactions per second on average, peaking up to 56,000 transactions per second~\cite{visa}. Reparametrization of \bitcoin---such as \bitcoinng---can improve this to a limited extent up to 27 transactions per second and 12 second latency, respectively~\cite{croman2016scaling}. More significant improvement requires a fundamental redesign of the blockchain paradigm. 


The most comparable system to \sysname is \omniledger~\cite{omniledger}---that was developed concurrently---and provides a scalable distributed ledger for a cryptocurrency, and cannot support generic smart contracts. \omniledger assigns nodes (selected using a Sybil-attack resistant mechanism) into shards among which state, representing coins, is split. The node-to-shard assignment is done every epoch using a bias-resistant decentralized randomness protocol~\cite{randhound} to prevent an adversary from compromising individual shards. A block-DAG (Directed Acyclic Graph) structure is maintained in each shard rather than a single blockchain, effectively creating multiple blockchains in which consensus of transactions can take place in parallel. Nodes within shards reach consensus through the Practical Byzantine Fault Tolerant (PBFT) protocol~\cite{castro1999practical} with \byzcoin~\cite{byzcoin}'s modifications that enable $O(n)$ messaging complexity. In contrast, \sysname uses \bftsmart's PBFT implementation~\cite{modsmart} as a black box, and inherits its $O(n^2)$ messaging complexity---however, \bftsmart  can be replaced with any improved PBFT variant without breaking any security assumptions. 

Similar to \sysname, \omniledger uses an atomic commit protocol to process transactions across shards. However, it uses a different, client-driven approach to achieve it. To commit a transaction, the client first sends the transaction to the network. The leader of each shard that is responsible for the transaction inputs (input shard) validates the transaction and returns a proof-of-acceptance (or proof-of-rejection) to the client, and inputs are locked. To unlock those inputs, the client sends proof-of-accepts to the output shards, whose leaders add the transaction to the next block to be appended to the blockchain. In case the transaction fails the validation test, the client can send proof-of-rejection to the input shards to roll back the transaction and unlock the inputs. To avoid denial-of-service, the protocol assumes that clients are incentivized to proceed to the Unlock phase. Such incentives may exist in a cryptocurrency application, where coin owners only can spend them, but do not hold for a generalized platform like \sysname where objects may have shared ownership. Hence, \sysname's atomic commit protocol has the entire shard---rather than a single untrusted client---act as a coordinator. Other related works include improvements to Byzantine consensus for reduced latency and decentralization~\cite{tendermint,stellar,ripple}, but these do not support sharding.


\elastico~\cite{elastico} scales by partitioning nodes in the network into a hierarchy of committees, where each committee is responsible for managing a subset (shard) of transactions consistently through PBFT. A final committee collates sets of transactions received from committees into a final block and then broadcasts it. At the end of each epoch, nodes are reassigned to committees through proof-of-work. The block throughput scales up almost linear to the size of the network. However, \elastico cannot process multi-shard transactions. 

\rscoin~\cite{rscoin} is a permissioned blockchain. The central bank controls all monetary supply, while mintettes (nodes authorized by the bank) manage subsets of transactions and coins.
Like \omniledger, communication between mintettes takes place indirectly, through the client---and also relies on the client to ensure completion of transactions. \rscoin has low communication overhead, and the transaction throughput scales linearly with the number of mintettes, but cannot support generic smart contracts.

Some systems improve transaction latency by replacing its probabilistic guarantees with strong consistency.
\byzcoin~\cite{byzcoin} extends \bitcoinng for high transaction throughput. 
A consensus group is organized into a communication tree where the most recent miner (the leader) is at the root. The leader runs an $O(n)$ variant of PBFT (using CoSI) to get all members to agree on the next microblock. The outcome is a collective signature that proves that at least two-thirds of the consensus group members witnessed and attested the microblock. A node in the network can verify in $O(1)$ that a microblock has been validated by the consensus group. PeerConsensus~\cite{Decker:2016} achieves strong consistency by allowing previous miners to vote on blocks. A \textit{Chain Agreement} tracks the membership of identities in the system that can vote on new blocks. \algorand~\cite{algorand} replaces proof-of-work with strong consistency by proposing a faster \textit{graded} Byzantine fault tolerance protocol, that allows for a set of nodes to decide on the next block. A key aspect of Algorand is that these nodes are selected randomly using algorithimic randomness based on input from previously generated blocks. However, none of those systems are designed to support generic smart contracts.

Some recent systems provide a transparent platform based on blockchains for smart contracts.  Hyperledger Fabric~\cite{cachin2016architecture} is a permissioned blockchain to setup private infrastructures for smart contracts. It is designed around the idea of a `consortium' blockchain, where a specific set of nodes are designated to validate transactions, rather than random nodes in a decentralized network. Each smart contract (called \textit{chaincode}) has its own set of \textit{endorsers} that re-execute submitted transactions to validate them. A \textit{consensus service} then orders transactions and filters out those endorsed by too few. It uses \textit{modular consensus}, which is replaceable depending on the requirements (e.g., Apache Kafka or SBFT).

Ethereum~\cite{wood2014ethereum} provides a decentralized Turing-complete virtual machine, called EVM, able to execute smart-contracts. Its main scalability limitation results from every node having to process every transaction, as Bitcoin. On the other hand, \sysname's sharded architecture allows for a ledger linearly scalable since only the nodes concerned by the transaction---that is, managing the transaction's inputs or references---have to process it. Ethereum plans to improve scalability through sharding techniques~\cite{buterin2015notes}, but their work is still theoretical and does not provide any implementation or measurements. One major difference with \sysname is that Ethereum's smart contract are executed by the node, contrarily to the user providing the outputs of each transaction. \sysname also supports smart contracts written in any kind of language as long as checkers are pure functions, and there are no limitations for the code creating transactions. Some industrial systems~\cite{tezos, rootstock, corda} implement similar functionalities as \sysname, but without any empirical performance evaluation.

In terms of security policy, \sysname system implements a platform that enforces high-integrity by embodying a variant of the Clark-Wilson~\cite{clark1987comparison}, proposed before smart contracts were heard of. 




\section{Conclusions}
\label{conclusions}
We presented the design of \sysname---an open, distributed ledger platform for high-integrity and transparent processing of transactions. \sysname offers extensibility though privacy-friendly smart contracts. We presented an instantiation of \sysname by parameterizing it with a number of `system' and `application' contracts, along with their evaluation. However, unlike existing smart-contract based systems such as Ethereum~\cite{wood2014ethereum}, it offers high scalability through sharding across nodes using a novel distributed atomic commit protocol called \sbac, while offering high auditability. We presented implementation and evaluation of \sbac on a real cloud-based testbed under varying transaction loads and showed that \sysname's transaction throughput scales linearly with the number of shards by up to 22 transactions per second for each shard added, handling up to 350 transactions per second with 15 shards.  As such it offers a competitive alternative to both centralized and permissioned systems, as well as fully peer-to-peer, but unscalable systems like Ethereum.



\ifproceedings{

\noindent {\bf Acknowledgements.} George Danezis, Shehar Bano and Alberto Sonnino are supported in part by EPSRC Grant EP/M013286/1 and the EU H2020 DECODE project under grant agreement number 732546. Mustafa Al-Bassam is supported by a scholarship from The Alan Turing Institute. Many thanks to Daren McGuinness and Ramsey Khoury for discussions about the \sysname design.
}\fi

\footnotesize

\bibliography{references}{}
\bibliographystyle{alpha}



\end{document}